\newcommand{\norm}[1]{\left\lVert#1\right\rVert}
\newtheorem{theo}{Theorem}
\newtheorem{lem}{Lemma}
\newtheorem{cor}{Corollary}
\theoremstyle{definition}
\newtheorem{defi}{Definition}
\begin{document}

\title{Privacy-Preserving Distributed Zeroth-Order
Optimization}

\author{Cristiano~Gratton,~\IEEEmembership{Member,~IEEE,}
        Naveen~K.~D.~Venkategowda,~\IEEEmembership{Member,~IEEE,}
				Reza~Arablouei,
        and~Stefan~Werner,~\IEEEmembership{Senior~Member,~IEEE}
\thanks{This work was partly supported by the Research Council of Noway. A conference precursor of this work appears in the \emph{Proceedings of the European Signal Processing Conference}, Amsterdam, NL, January 2021~\cite{Gratton2020eusipco}.}	
\thanks{C. Gratton, N. K. D. Venkategowda, and S. Werner are with the Department of Electronic Systems, Norwegian University of Science and Technology, Trondheim, Norway (email:cristiano.gratton@ntnu.no; naveen.dv@ntnu.no; stefan.werner@ntnu.no).}
\thanks{R. Arablouei is with the Commonwealth Scientific and Industrial Research Organisation, Pullenvale QLD 4069, Australia (email:reza.arablouei@csiro.au).}}


\maketitle

\begin{abstract}
We develop a privacy-preserving distributed algorithm to minimize a regularized empirical risk function when the first-order information is not available and data is distributed over a multi-agent network. We employ a zeroth-order method to minimize the associated augmented Lagrangian function in the primal domain using the alternating direction method of multipliers (ADMM). We show that the proposed algorithm, named distributed zeroth-order ADMM (D-ZOA), has intrinsic privacy-preserving properties. Unlike the existing privacy-preserving methods based on the ADMM where the primal or the dual variables are perturbed with noise, the inherent randomness due to the use of a zeroth-order method endows D-ZOA with intrinsic differential privacy. By analyzing the perturbation of the primal variable, we show that the privacy leakage of the proposed D-ZOA algorithm is bounded. In addition, we employ the moments accountant method to show that the total privacy leakage grows sublinearly with the number of ADMM iterations. D-ZOA outperforms the existing differentially private approaches in terms of accuracy while yielding the same privacy guarantee. We prove that D-ZOA converges to the optimal solution at a rate of $\mathcal{O}(1/M)$ where $M$ is the number of ADMM iterations. The convergence analysis also reveals a practically important trade-off between privacy and accuracy. Simulation results verify the desirable privacy-preserving properties of D-ZOA and its superiority over a state-of-the-art algorithm as well as its network-wide convergence to the optimal solution.
\end{abstract}

\begin{IEEEkeywords}
Alternating direction method of multipliers, differential privacy, distributed optimization, zeroth-order optimization methods.
\end{IEEEkeywords}

\IEEEpeerreviewmaketitle

\section{Introduction}

\IEEEPARstart{W}{ith} the recent advances in technology, large amounts of data are gathered by numerous sensors scattered over large geographical areas. Performing learning tasks at a central processing hub in a large distributed network can be prohibitive due to computation/communication costs. Collecting all data at a central hub may also create a single point of failure. Therefore, it is important to develop algorithms that are capable of processing the data gathered by agents dispersed over a distributed network \cite{Mateos2010, Gratton2019, Grattonasilomar2018, Akhtar2018, Venkategowdaglobalsip2018, Giannakis2016, Hajinezhad2019, Nedic2009, Talebi2019}.
In this context, each agent has access only to the information of its local objective function while the agents aim to collaboratively optimize the aggregate of the local objective functions. Such distributed solutions are highly demanded in many of today's optimization problems pertaining to statistics \cite{Mateos2010,Gratton2019,Grattonasilomar2018}, signal processing \cite{Akhtar2018, Venkategowdaglobalsip2018, Giannakis2016}, and control \cite{Hajinezhad2019, Nedic2009,Talebi2019}.

However, the communications between neighboring agents may lead to privacy violation issues. An adversary may infer sensitive data of one or more agents by sniffing the communicated information. The adversary can be either a curious member of the network or an eavesdropper. Therefore, it is important to develop privacy-preserving methods that allow  distributed processing of data without revealing private information. Differential privacy provides privacy protection against adversarial attacks by ensuring minimal change in the outcome of the algorithm regardless of whether or not a single agent's data is taken into account.

Moreover, in some real-world problems, obtaining first-order information is hard due to non-smooth objectives \cite{Nedic2009, Mateos2010, Hajinezhad2019} or lack of any complete objective function. For example, in bandit optimization \cite{Agarwal2010}, an adversary generates a sequence of loss functions and the goal is to minimize such sequence that is only available at some agents. In addition, in simulation-based optimization, the objective is available only using repeated simulations \cite{Spall2003} while, in adversarial black-box machine learning models, only the function values are given \cite{Chen2017}. 
This motivates the use of zeroth-order methods, which only use the values of the objective functions to approximate their gradients.

\subsection{Related Work}

There have been several works developing privacy-preserving algorithms for distributed convex optimization \cite{Tao2017,Xueruallerton,Xueru2018,Zhuhan,Zhuhan2019,Mitra2015,Pappas,Hale}. The work in \cite{Tao2017} proposes two differentially private distributed algorithms that are based on the alternating direction method of multipliers (ADMM). The algorithms in \cite{Tao2017} are obtained by perturbing the dual and the primal variable, respectively. However, in both algorithms, the privacy leakage of an agent is bounded only at a single iteration and an adversary might exploit knowledge available from all iterations to infer sensitive information. This shortcoming is mitigated in \cite{Xueruallerton,Xueru2018,Zhuhan,Zhuhan2019}. The works in \cite{Xueruallerton,Xueru2018} develop ADMM-based differentially private algorithms with improved accuracy. The work in \cite{Zhuhan} employs the ADMM to develop a distributed algorithm where the primal variable is perturbed by adding a Gaussian noise with diminishing variance to ensure zero-concentrated differential privacy enabling higher accuracy compared to the common $(\epsilon,\delta)$-differential privacy. The work in \cite{Zhuhan2019} develops a stochastic ADMM-based distributed algorithm that further enhances the accuracy while ensuring differential privacy. The authors of \cite{Mitra2015,Pappas,Hale} propose differentially-private distributed algorithms that utilize the projected-gradient-descent method for handling constraints. The differentially private distributed algorithm proposed in \cite{Cortes} is based on perturbing the local objective functions. 

All the above-mentioned algorithms offer distributed solutions only for problems with smooth objective functions. The work in \cite{Huang2020} addresses problems with non-smooth objective functions by employing a first-order approximation of the augmented Lagrangian with a scalar $l_2$-norm proximity operator. However, this algorithm is not fully distributed since it requires a central coordinator to average all the perturbed primal variable updates over the network at every iteration.

Most existing algorithms require some modifications through deliberately perturbing either the local estimates or the objective functions. This compromises the performance of the algorithm by degrading its accuracy especially when large amount of noise is required to provide high privacy levels. The work in \cite{intrinsic} considers privacy-preserving properties that are intrinsic, i.e., they do not require any change in the algorithm but are associated with the algorithm's inherent properties. However, the approach taken in \cite{intrinsic} considers a privacy metric based on the topology of the communication graph.

\subsection{Contributions}

In this paper, we develop a fully-distributed differentially-private algorithm to solve a class of regularized empirical risk minimization (ERM) problems when first-order information is unavailable or hard to obtain. We utilize the ADMM for distributed optimization and a zeroth-order method, called the two-point stochastic gradient algorithm \cite{Duchi2015}, to minimize the augmented Lagrangian function in the ADMM's primal update step. The proposed algorithm, called distributed zeroth-order ADMM (D-ZOA), is fully distributed in the sense that each agent of the network communicates only with its immediate neighbors and no central coordination is necessary. It also only requires the objective function values to solve the underlying ERM problem while respecting privacy.

The privacy-preserving properties of the proposed D-ZOA algorithm are intrinsic. To prove this, we model the primal variable at each agent as the sum of an exact (unperturbed) value and a random perturbation. This enables us to approximate the distribution of the primal variable and verify that the stochasticity inherent to the employed zeroth-order method makes D-ZOA differential private. To this end, we prove that the privacy leakage of a single iteration of D-ZOA at each agent is bounded. Utilizing the moments accountant method \cite{Abadi}, we also show that the total privacy leakage over all iterations grows sublinearly with the number of ADMM iterations. This is particularly significant as we observe that the optimization accuracy of D-ZOA is higher compared to the existing privacy-preserving approaches, which perturb the variables exchanged among the network agents by adding noise. 

We prove that the convergence rate of D-ZOA is $\mathcal{O}(1/M)$ where $M$ is the number of iterations of the ADMM outer loop. The convergence rate of the inner loop is $\mathcal{O}(\sqrt{P/T})$ where $P$ is the number of features and $T$ is the number of iterations of the inner loop. More importantly, no communication among agents is required throughout the inner loop. The convergence analysis also reveals an explicit privacy-accuracy trade-off where a stronger privacy guarantee corresponds to a lower accuracy.

Simulation results demonstrate that, with any given level of required privacy guarantee, D-ZOA outperforms an existing ADMM-based related algorithm (DP-ADMM) presented in \cite{Huang2020}. This algorithm has been developed for distributed optimization of non-smooth objective functions and achieves differential privacy by adding noise to the primal variables~\cite{Huang2020}.

\subsection{Paper Organization}

The rest of the paper is organized as follows. In Section \ref{secttwo}, we describe the system model and formulate the distributed ERM problem when first-order information is not available. In Section \ref{sectthree}, we describe our proposed D-ZOA algorithm and explain the privacy issues associated with distributed learning. In Section \ref{sectfour}, we present the intrinsic privacy-preserving properties of the proposed D-ZOA algorithm by showing that the privacy leakage of each agent at any iteration is bounded and the total privacy leakage grows sublinearly with the number of ADMM iterations. In Section \ref{sectfive}, we prove the convergence of D-ZOA by confirming that both inner and outer loops of the algorithm converge. We provide some simulation results in Section \ref{sectsix} and draw conclusions in Section \ref{sectseven}.

\subsection{Mathematical notations}

The set of natural and real numbers are denoted by $\mathbb{N}$ and $\mathbb{R}$, respectively. The set of positive real numbers is denoted by $\mathbb{R}_+$. Scalars, column vectors, and matrices are respectively denoted by lowercase, bold lowercase, and bold uppercase letters. The operators $(\cdot)^\mathsf{T}$, $\text{det}(\cdot)$, and $\text{tr}(\cdot)$ denote transpose, determinant, and trace of a matrix, respectively. $\norm{\cdot}$ represents the Euclidean norm of its vector argument. $\mathbf{I}_n$ is an identity matrix of size $n$, $\mathbf{0}_n$ is an $n\times 1$ vector with all zeros entries, $\mathbf{0}_{n\times p}=\mathbf{0}_n\mathbf{0}_p^\mathsf{T}$, and $|\cdot|$ denotes the cardinality if its argument is a set. The statistical expectation and covariance operators are represented by $\mathbb{E}[\cdot]$ and $\text{cov}[\cdot]$, respectively. The notation  $\mathcal{N}(\boldsymbol{\mu},\boldsymbol{\Sigma})$ denotes normal distribution with mean $\boldsymbol{\mu}$ and covariance matrix $\boldsymbol{\Sigma}$.  For a positive semidefinite matrix $\mathbf{X}$, $\lambda_{\min}(\mathbf{X})$ and $\lambda_{\max}(\mathbf{X})$ denote the nonzero smallest and largest eigenvalues of $\mathbf{X}$, respectively. For a vector $\mathbf{x}\in\mathbb{R}^n$ and a matrix $\mathbf{A}$, $\norm{\mathbf{x}}^2_{\mathbf{A}}$ denotes the quadratic form $\mathbf{x}^\mathsf{T}\mathbf{A}\mathbf{x}$.

\section{System Model} \label{secttwo}

We consider a network with $K\in\mathbb{N}$ agents and $E\in\mathbb{N}$ edges modeled as an undirected graph $\mathcal{G}(\mathcal{K},\mathcal{E})$ where the vertex set $\mathcal{K}=\{1,\dots,K\}$ corresponds to the agents and the set $\mathcal{E}$ represents the bidirectional communication links between the pairs of neighboring agents. Agent $k\in\mathcal{K}$ can communicate only with the agents in its neighborhood $\mathcal{N}_k$. By convention, the set $\mathcal{N}_k$ includes the agent $k$ as well. 

Each agent $k\in\mathcal{K}$ has a private dataset
\begin{align*}
\mathcal{D}_k=\{(\mathbf{X}_k,\mathbf{y}_k):
\mathbf{X}_k&=[\mathbf{x}_{k,1},\mathbf{x}_{k,2},\hdots,\mathbf{x}_{k,N_k}]^\mathsf{T}\in\mathbb{R}^{N_k\times P},\\
\mathbf{y}_k&=[y_{k,1},y_{k,2},\hdots,y_{k,N_k}]^\mathsf{T}\in\mathbb{R}^{N_k}\}
\end{align*}
where $N_k$ is the number of data samples collected at the agent $k$ and $P$ is the number of features in each sample. 

We consider the problem of estimating a parameter of interest $\boldsymbol{\beta}\in\mathbb{R}^{P}$ that relates the value of an output measurement stored in the response vector $\mathbf{y}_k$ to input measurements collected in the corresponding row of the local matrix $\mathbf{X}_k$. The associated supervised learning problem can be cast as a regularized ERM expressed by
\begin{equation}
\label{firsteq}
\min_{\boldsymbol{\beta}}\sum_{k=1}^K\frac{1}{N_k}\sum_{j=1}^{N_k}\ell(\mathbf{x}_{k,j},y_{k,j};\boldsymbol{\beta})+\eta R(\boldsymbol{\beta})
\end{equation}
where $\ell:\mathbb{R}^{P}\rightarrow\mathbb{R}$ is the loss function, $R:\mathbb{R}^{P}\rightarrow\mathbb{R}$ is the regularizer function, and $\eta>0$ is the regularization parameter. The ERM problem pertains to several applications in machine learning, e.g., linear regression \cite{Mateos2010}, support vector machine \cite{Forero2010}, and logistic regression \cite{Huang2020,Zhuhan}. We assume that the loss function $\ell(\cdot)$ and the regularizer function $R(\cdot)$ are both convex but at least one of them is \textit{non-smooth}. Let us denote the optimal solution of \eqref{firsteq} by $\boldsymbol{\beta}^c$.

\section{Non-Smooth Distributed Learning} \label{sectthree}

We first discuss the consensus-based reformulation of the problem that allows its distributed solution through an iterative process consisting of two nested loops. Then, we describe the ADMM procedure that forms the outer loop and the zeroth-order two-point stochastic gradient algorithm that constitutes the inner loop solving the ADMM primal update step. Finally, we discuss the related privacy matters.

\subsection{Consensus-Based Reformulation}

To solve \eqref{firsteq} in a distributed manner, we reformulate it as the following constrained minimization problem 

\begin{equation}
\begin{aligned}
&\underset{\{\boldsymbol{\beta}_k\}}{\min} 
&& \sum_{k=1}^{K}\Bigl(\frac{1}{N_k}\sum_{j=1}^{N_k}\ell(\mathbf{x}_{k,j},y_{k,j};\boldsymbol{\beta}_k)+\frac{\eta}{K}R(\boldsymbol{\beta}_k)\Bigr)  \\
&\text{\ s.t.}\ 
&& \boldsymbol{\beta}_k=\boldsymbol{\beta}_l, \quad l\in\mathcal{N}_k, \quad \forall k\in\mathcal{K}
\end{aligned}
\label{secondeq}
\end{equation}
where $\mathcal{V}=\{\boldsymbol{\beta}_k\}_{k=1}^K$ are the primal variables representing local copies of $\boldsymbol{\beta}$ at the agents. The equality constraints impose consensus across each agent's neighborhood $\mathcal{N}_k$. To solve \eqref{secondeq} collaboratively and in a fully-distributed manner, we utilize the ADMM \cite{Giannakis2016}. For this purpose, we rewrite \eqref{secondeq} as
\begin{equation}
\begin{aligned}
&\underset{\{\boldsymbol{\beta}_k\}}{\min}
&& \sum_{k=1}^{K}\Bigl(\frac{1}{N_k}\sum_{j=1}^{N_k}\ell(\mathbf{x}_{k,j},y_{k,j};\boldsymbol{\beta}_k)+\frac{\eta}{K}R(\boldsymbol{\beta}_k)\Bigr) \\
&\text{\ s.t.}\ 
&& \boldsymbol{\beta}_k=\mathbf{z}_k^l,\ \boldsymbol{\beta}_l=\mathbf{z}_k^l, \quad l\in\mathcal{N}_k, \quad\forall k\in\mathcal{K}
\end{aligned}
\label{thirdeq}
\end{equation}
where $\mathcal{Z}=\{\mathbf{z}_k^l\}_{k \in\mathcal{K}, l\in\mathcal{N}_k}$ are the auxiliary variables yielding an alternative but equivalent representation of the constraints in \eqref{secondeq}. They help decouple $\boldsymbol{\beta}_k$ in the constraints and facilitate the derivation of the local recursions before being eventually eliminated.

To apply the ADMM, we rewrite \eqref{thirdeq} in the matrix form. By defining $\mathbf{w}\in\mathbb{R}^{KP}$ concatenating all $\boldsymbol{\beta}_k$ and $\mathbf{z}\in\mathbb{R}^{2EP}$ concatenating all $\mathbf{z}_k^l$, \eqref{thirdeq} can be written as  
\begin{equation}
\begin{aligned}
&\underset{\mathbf{w},\mathbf{z}}{\min}
&& f(\mathbf{w}) \\
&\text{\ s.t.}\ 
&& \mathbf{A}\mathbf{w}+\mathbf{B}\mathbf{z}=\mathbf{0}
\end{aligned}
\label{convergence3}
\end{equation}
where
\begin{equation*}
\begin{aligned}
f(\mathbf{w})&=\sum_{k=1}^Kf_k(\boldsymbol{\beta}_k),\\
f_k(\boldsymbol{\beta}_k)&=\frac{1}{N_k}\sum_{j=1}^{N_k}\ell(\mathbf{x}_{k,j},y_{k,j};\boldsymbol{\beta}_k)+\frac{\eta}{K}R(\boldsymbol{\beta}_k),
\end{aligned}
\end{equation*}
$\mathbf{A}=[\mathbf{A}_1^\mathsf{T},\mathbf{A}_2^\mathsf{T}]^\mathsf{T}$, and $\mathbf{A}_1,\mathbf{A}_2\in\mathbb{R}^{2EP\times KP}$ are both composed of $2E\times K$ blocks of $P\times P$ matrices. If $(k,l)\in\mathcal{E}$ and $\mathbf{z}_k^l$ is the $q$th block of $\mathbf{z}$, then the $(q,k)$th block of $\mathbf{A}_1$ and the $(q,l)$th block of $\mathbf{A}_2$ are the identity matrix $\mathbf{I}_P$. Otherwise, the corresponding blocks are $\mathbf{0}_{P\times P}$. Furthermore, we have
\begin{equation*}
\mathbf{B}=[-\mathbf{I}_{2EP},-\mathbf{I}_{2EP}]^\mathsf{T}.
\end{equation*}
To facilitate the representation, we also define the following matrices
\begin{align*}
\mathbf{M}_+&=\mathbf{A}_1^\mathsf{T}+\mathbf{A}_2^\mathsf{T}\\ \mathbf{M}_-&=\mathbf{A}_1^\mathsf{T}-\mathbf{A}_2^\mathsf{T}\\ \mathbf{L}_+&=0.5\mathbf{M}_+\mathbf{M}_+^\mathsf{T}\\ \mathbf{L}_-&=0.5\mathbf{M}_-\mathbf{M}_-^\mathsf{T}\\
\mathbf{H}&=0.5(\mathbf{L}_++\mathbf{L}_-)\\
\mathbf{Q}&=\sqrt{0.5\mathbf{L}_-}.
\end{align*}

Solving \eqref{convergence3} via the ADMM requires a procedure that is described in the next subsection.

\subsection{Distributed ADMM Algorithm}

To solve the minimization problem \eqref{convergence3} in a distributed manner, we employ the ADMM \cite{Wotao2014}. The augmented Lagrangian function associated with \eqref{convergence3} is given by
\begin{equation}
\label{eqfourth}
\mathcal{L}_{\rho}(\mathbf{w},\mathbf{z},\boldsymbol{\lambda})=f(\mathbf{w})+\boldsymbol{\lambda}^\mathsf{T}(\mathbf{A}\mathbf{w}+\mathbf{B}\mathbf{z})+\frac{\rho}{2}\norm{\mathbf{A}\mathbf{w}+\mathbf{B}\mathbf{z}}^2
\end{equation}
where $\boldsymbol{\lambda}\in\mathbb{R}^{4EP}$ is the Lagrange multiplier and $\rho>0$ is a penalty parameter.

The ADMM entails an iterative procedure consisting of three steps at each iteration $m$ as
\begin{equation}
\begin{aligned}
\mathbf{w}^{(m)}&=\arg\min_{\mathbf{w}}\mathcal{L}_{\rho}(\mathbf{w},\mathbf{z}^{(m-1)},\boldsymbol{\lambda}^{(m-1)})\\
\mathbf{z}^{(m)}&=\arg\min_{\mathbf{z}}\mathcal{L}_{\rho}(\mathbf{w}^{(m)},\mathbf{z},\boldsymbol{\lambda}^{(m-1)})\\
\boldsymbol{\lambda}^{(m)}&=\boldsymbol{\lambda}^{(m-1)}+\rho[\mathbf{A}\mathbf{w}^{(m)}+\mathbf{B}\mathbf{z}^{(m)}].
\end{aligned}
\label{eqfifth}
\end{equation}
In \cite{Wotao2014}, it is shown that, considering $\boldsymbol{\lambda}=[\boldsymbol{\lambda}_1^\mathsf{T},\boldsymbol{\lambda}_2^\mathsf{T}]^\mathsf{T}$ with $\boldsymbol{\lambda}_1,\boldsymbol{\lambda}_2\in\mathbb{R}^{2EP}$ and $\boldsymbol{\gamma}=\mathbf{M}_-\boldsymbol{\lambda}_1$, the ADMM algorithm steps in \eqref{eqfifth} reduce to
\begin{align}
&\mathbf{w}^{(m)}=\arg\min_{\mathbf{w}}\mathcal{F}(\mathbf{w},\mathbf{w}^{(m-1)},\boldsymbol{\gamma}^{(m-1)}) \label{eqsixtha}\\
&\boldsymbol{\gamma}^{(m)}=\boldsymbol{\gamma}^{(m-1)}+\rho\mathbf{L}_-\mathbf{w}^{(m)} \label{eqsixthb}
\end{align}
where
\begin{align*}
\mathcal{F}(\mathbf{w},\mathbf{w}^{(m-1)},\boldsymbol{\gamma}^{(m-1)})&= f(\mathbf{w})+\mathbf{w}^\mathsf{T}\boldsymbol{\gamma}^{(m-1)}\\
&+\rho\mathbf{w}^\mathsf{T}\mathbf{H}\mathbf{w}-\rho\mathbf{w}^\mathsf{T}\mathbf{L}_+\mathbf{w}^{(m-1)}
\end{align*}
and the initial values of $\mathbf{w}$ and $\boldsymbol{\gamma}$ are set to zero. Note that the update equations \eqref{eqsixtha} and \eqref{eqsixthb} are distributed among the agents as $\mathbf{w}=[\boldsymbol{\beta}_1^\mathsf{T},\boldsymbol{\beta}_2^\mathsf{T},\hdots,\boldsymbol{\beta}_K^\mathsf{T}]^\mathsf{T}$ and $\boldsymbol{\gamma}=[\boldsymbol{\gamma}_1^\mathsf{T},\boldsymbol{\gamma}_2^\mathsf{T},\hdots,\boldsymbol{\gamma}_K^\mathsf{T}]^\mathsf{T}$ where
\begin{equation*}
\boldsymbol{\gamma}_k=2\sum_{l\in\mathcal{N}_k}\boldsymbol{\gamma}_k^l
\end{equation*}
is the local Lagrange multiplier at agent $k$ associated with the constraints in \eqref{thirdeq} \cite{Giannakis2016}.

Since the objective function in \eqref{eqsixtha} is assumed to be non-smooth, it cannot be solved using any first-order method. To overcome this, we use a zeroth-order method described in the next subsection.

\subsection{Zeroth-Order Method}

To solve \eqref{eqsixtha} employing a zeroth-order method, we make the following assumptions that are common in the zeroth-order optimization literature, see, e.g., \cite{Hajinezhad2019,Duchi2015,Nesterov2017}.

\textit{Assumption 1:} 
The function $\mathcal{F}(\cdot)$ is closed.

\textit{Assumption 2:} 
The function $\mathcal{F}(\cdot)$ is Lipschitz-continuous with the Lipschitz constant $L$.

In zeroth-order optimization, $\mathcal{F}(\cdot)$ is also required to be convex \cite{Hajinezhad2019,Duchi2015,Nesterov2017}. However, $\mathcal{F}(\cdot)$ is the sum of $f(\cdot)$, which are assumed to be convex, the linear terms $\mathbf{w}^\mathsf{T}\boldsymbol{\gamma}^{(m-1)}$ and $-\rho\mathbf{w}^\mathsf{T}\mathbf{L}_+\mathbf{w}^{(m-1)}$, and the quadratic term $\rho\mathbf{w}^\mathsf{T}\mathbf{H}\mathbf{w}$, which is convex since $\mathbf{H}$ is positive semi-definite for being the sum of the positive semi-definite matrices $\mathbf{L_+}$ and $\mathbf{L_-}$. Therefore, $\mathcal{F}(\cdot)$ is convex as it is the sum of convex functions \cite{BoydStephenP2004Co}.

We utilize the two-point stochastic-gradient algorithm that has been proposed in \cite{Duchi2015} for optimizing general non-smooth functions. More specifically, we use the stochastic mirror descent method with the proximal function $\frac{1}{2}\norm{\cdot}$ and the gradient estimator at point $\mathbf{w}$ given by 
\begin{align}
&\Gamma(\mathbf{w},\boldsymbol{\gamma}^{(m-1)},u_1,u_2,\boldsymbol{\nu}_1,\boldsymbol{\nu}_2) =  u_2^{-1}[\mathcal{F}(\mathbf{w}+u_1\boldsymbol{\nu}_1\nonumber\\
&\qquad\quad+u_2\boldsymbol{\nu}_2,\boldsymbol{\gamma}^{(m-1)})-\mathcal{F}(\mathbf{w}+u_1\boldsymbol{\nu}_1,\boldsymbol{\gamma}^{(m-1)})]\boldsymbol{\nu}_2
\label{firstzeroth}
\end{align}
where $u_1>0$ and $u_2>0$ are smoothing constants and $\boldsymbol{\nu}_1$, $\boldsymbol{\nu}_2$ are independent zero-mean Gaussian random vectors with the covariance matrix $\mathbf{I}_{KP}$, i.e., 
$\boldsymbol{\nu}_1,\boldsymbol{\nu}_2\sim\mathcal{N}(\mathbf{0}_{KP},\mathbf{I}_{KP})$.

The two-point stochastic-gradient algorithm consists of two randomization steps where the second step is aimed at preventing the perturbation vector $\boldsymbol{\nu}_2$ from being close to a point of non-smoothness \cite{Duchi2015}. This algorithm entails an iterative procedure that consists of three steps at each iteration $t$. First, $J\in\mathbb{N}$ independent random vectors $\{\boldsymbol{\nu}_{1,t}^j\}_{j=1}^J$ and $\{\boldsymbol{\nu}_{2,t}^j\}_{j=1}^J$ are sampled from $\mathcal{N}(\mathbf{0}_{KP},\mathbf{I}_{KP})$. Second, a stochastic gradient $\mathbf{g}^{(t)}$ is calculated as
\begin{equation}
\label{secondzeroth}
\mathbf{g}^{(t)} = \frac{1}{J}\sum_{j=1}^J\mathbf{g}_j^{(t)}
\end{equation}
where
\begin{align*}
\mathbf{g}_j^{(t)}=\Gamma(\mathbf{w}^{(t)},\boldsymbol{\gamma}^{(m-1)},u_{1,t},u_{2,t},\boldsymbol{\nu}_{1,t}^j,\boldsymbol{\nu}_{2,t}^j),
\end{align*}
$\mathbf{w}^{(t)}$ is the $t$th iterate of the two-point stochastic-gradient algorithm with the initial value $\mathbf{w}^{(0)}=\mathbf{0}$ and $\{u_{1,t}\}_{t=1}^{\infty}$ and $\{u_{2,t}\}_{t=1}^{\infty}$ are two non-increasing sequences of positive parameters such that $u_{2,t}\le u_{1,t}/2$. Finally, $\mathbf{w}^{(t)}$ is updated as 
\begin{equation}
\label{thirdzeroth}
\mathbf{w}^{(t)}=\mathbf{w}^{(t-1)}-\alpha_{t}\mathbf{g}^{(t)}
\end{equation}
where $\alpha_{t}$ is a time-varying step-size. The step-size is computed as
\begin{equation*}
\alpha_{t} = \left(L\sqrt{tP\log(2P)}\right)^{-1}\alpha_{0} R    
\end{equation*}
where $\alpha_{0}$ is an appropriate initial step-size and $R$ is an upper bound on the distance between a minimizer $\mathbf{w}^*$ to \eqref{eqsixtha} and the first iterate $\mathbf{w}^{(1)}$ as per \cite{Duchi2015}. 

Locally at agent $k$, the appropriate subvectors $\{\boldsymbol{\nu}_{1,t}^{j,k}\}_{j=1}^J$ and $\{\boldsymbol{\nu}_{2,t}^{j,k}\}_{j=1}^J$ of, respectively, $\{\boldsymbol{\nu}_{1,t}^j\}_{j=1}^J$ and $\{\boldsymbol{\nu}_{2,t}^j\}_{j=1}^J$ are utilized. These subvectors are also independent of each other and have independent entries. Subsequently, a $k$-local stochastic gradient $\mathbf{g}_k^{(t)}$ is computed as 
\begin{equation}
\label{fourthzeroth}
\mathbf{g}_k^{(t)} = \frac{1}{J}\sum_{j=1}^J\mathbf{g}_{j,k}^{(t)}
\end{equation}
where
\begin{equation*}
\mathbf{g}_{j,k}^{(t)}=\Gamma(\boldsymbol{\beta}_k^{(t)},\boldsymbol{\gamma}_k^{(m-1)},u_{1,t},u_{2,t},\boldsymbol{\nu}_{1,t}^{j,k},\boldsymbol{\nu}_{2,t}^{j,k}).
\end{equation*}
Hence, $\boldsymbol{\beta}_k^{(t)}$ is updated as
\begin{equation}
\label{fifthzeroth}
\boldsymbol{\beta}_k^{(t)}=\boldsymbol{\beta}_k^{(t-1)}-\alpha_{t}\mathbf{g}_k^{(t)}.
\end{equation}
We use multiple independent random samples $\{\boldsymbol{\nu}_{1,t}^j\}_{j=1}^J$ and $\{\boldsymbol{\nu}_{2,t}^j\}_{j=1}^J$ to obtain a more accurate estimate of the gradient $\mathbf{g}^{(t)}$ as remarked in \cite{Duchi2015}. 

Furthermore, no communication among agents is needed in the inner loop. The update equations in \eqref{thirdzeroth} and \eqref{fifthzeroth} can be implemented in a fully-distributed fashion since they involve only the variables available within every agent's neighborhood.
The proposed algorithm, D-ZOA, is summarized in Algorithm~1.

\begin{algorithm}[t!]
\caption{Distributed Zeroth-Order ADMM (D-ZOA)}

\label{alg:IDP-D-ZOA}

 \begin{algorithmic}

  \STATE At all agents $k\in\mathcal{K}$, initialize $\boldsymbol{\beta}_k^{(0)}=\mathbf{0}$, $\boldsymbol{\gamma}_k^{(0)}=\mathbf{0}$, and locally run 
   \FOR{$m=1,2,\hdots, M$} 
	 \STATE Share $\boldsymbol{\beta}_k^{(m-1)}$ with neighbors in $\mathcal{N}_k$
   \STATE Update $\boldsymbol{\gamma}_k^{(m)}$ as in \eqref{fourthprivacyc}
	\STATE Initialize $\boldsymbol{\beta}_k^{(0)}=\mathbf{0}$
	\FOR{$t=1,2,\hdots,T$}
	 \STATE Draw independent $\{\boldsymbol{\nu}_{1,t}^{j,k}\}_{j=1}^J,\{\boldsymbol{\nu}_{2,t}^{j,k}\}_{j=1}^J\sim\mathcal{N}(\mathbf{0}_{P},\mathbf{I}_{P})$
	\STATE Set $u_{1,t}=u_1/t$, $u_{2,t}=u_1/(Pt)^2$
	\STATE Compute $\mathbf{g}_k^{(t)}$ as in \eqref{fourthzeroth} and \eqref{firstzeroth}
	\STATE Update $\boldsymbol{\beta}_k^{(t)}=\boldsymbol{\beta}_k^{(t-1)}-\alpha_{t}\mathbf{g}_k^{(t)}$ 
	\ENDFOR
   \STATE Update $\boldsymbol{\beta}_k^{(m)}=\boldsymbol{\beta}_k^{(T)}$
	\ENDFOR
 \end{algorithmic} 
 \end{algorithm}

\subsection{Privacy}

In Algorithm \ref{alg:IDP-D-ZOA}, the data stored at each agent, $\mathbf{X}_k$ and $\mathbf{y}_k$, is not shared with any other agent. However, the local estimates $\{\boldsymbol{\beta}_k^{(m)}\}_{k\in\mathcal{K}}$ are exchanged within the local neighborhoods. Therefore, the risk of privacy breach still exists as it has been shown by model inversion attacks \cite{fredrikson2015}. We assume that the adversary cannot access the local data, but is able to access the exchanged local estimates $\{\boldsymbol{\beta}_k^{(m)}\}_{k\in\mathcal{K}}$. The adversary can be either a member of the network or an external eavesdropper. We show that D-ZOA guarantees $(\epsilon,\delta)$-differential privacy as per below definition since it is intrinsically resistant to such inference attacks.
\begin{defi}
A randomized algorithm $\mathcal{M}$ is $(\epsilon,\delta)$-differentially private if for any two neighboring datasets $\mathcal{D}$ and $\mathcal{D}'$ differing in only one data sample and for any subset of outputs $\mathcal{O}\subseteq\text{range}(\mathcal{M})$, we have
\begin{equation}
\label{firstdef}
\text{Pr}[\mathcal{M}(\mathcal{D})\in\mathcal{O}]\le e^{\epsilon}\text{Pr}[\mathcal{M}(\mathcal{D}')\in\mathcal{O}]+\delta.
\end{equation}
This means the ratio of the probability distributions of $\mathcal{M}(\mathcal{D})$ and $\mathcal{M}(\mathcal{D}')$ is bounded by $e^{\epsilon}$.
\end{defi}

In Definition 1, $\epsilon$ and $\delta$ are privacy parameters indicating the level of privacy preservation ensured by a differentially private algorithm. A better privacy preservation is achieved with smaller $\epsilon$ or $\delta$. On the other hand, low privacy guarantee corresponds to higher values of $\epsilon$, i.e., close to $1$. Therefore, it is reasonable to assume that $\epsilon\in\mathopen(0,1\mathclose]$ as in \cite{Liu2019,Zhuhan}.

\section{Intrinsic Differential Privacy Guarantee} \label{sectfour}

Employing the zeroth-order method for the ADMM primal update produces a perturbed (inexact) estimate. Therefore, the solution in the primal update step in \eqref{eqsixtha} using D-ZOA can be modeled as 
\begin{equation}
\label{firstprivacy}
\mathbf{w}^{(m)}=\Breve{\mathbf{w}}^{(m)}+\boldsymbol{\xi}^{(m)}
\end{equation}
where $\Breve{\mathbf{w}}^{(m)}\in\mathbb{R}^P$ is the exact ADMM primal update and $\boldsymbol{\xi}^{(m)}\in\mathbb{R}^P$ is a random variable representing the perturbation. The oracle $\Breve{\mathbf{w}}^{(m)}$ satisfies the equation 
\begin{equation}
\label{secondprivacy}
\nabla f(\Breve{\mathbf{w}}^{(m)})+\boldsymbol{\gamma}^{(m-1)}+2\rho\mathbf{H}\Breve{\mathbf{w}}^{(m)}=\rho\mathbf{L}_+\mathbf{w}^{(m-1)}
\end{equation}
where $\nabla f$ is the hypothetical exact gradient of $f$. Equation \eqref{secondprivacy} is obtained by computing $\nabla \mathcal{F}(\Breve{\mathbf{w}},\mathbf{w}^{(m-1)},\boldsymbol{\gamma}^{(m-1)})$ and equating it to zero \cite[pg. 741]{Sayedbook}. The model \eqref{firstprivacy} represents an implicit primal variable perturbation that can be contrasted with the explicit primal variable perturbation used in \cite{Tao2017,Zhuhan}.

Using \eqref{firstprivacy}, the primal and dual update equations in \eqref{secondprivacy} and \eqref{eqsixthb} can be expressed as
\begin{equation}
\begin{aligned}
\Breve{\mathbf{w}}^{(m)}&=-\frac{1}{2\rho}\mathbf{H}^{-1}\nabla f(\Breve{\mathbf{w}}^{(m)})-\frac{1}{2\rho}\mathbf{H}^{-1}\boldsymbol{\gamma}^{(m-1)}\\
&+\frac{1}{2}\mathbf{H}^{-1}\mathbf{L}_+\mathbf{w}^{(m-1)}\\
\mathbf{w}^{(m)}&=\Breve{\mathbf{w}}^{(m)}+\boldsymbol{\xi}^{(m)}\\
\boldsymbol{\gamma}^{(m)}&=\boldsymbol{\gamma}^{(m-1)}+\rho\mathbf{L}_-\mathbf{w}^{(m)}.
\label{thirdprivacy}
\end{aligned}
\end{equation}

Note that \eqref{firstprivacy} also applies to all agents as
\begin{align*}
\Breve{\mathbf{w}}^{(m)}&=\left[\Breve{\boldsymbol{\beta}}_1^{(m)\mathsf{T}},\Breve{\boldsymbol{\beta}}_2^{(m)\mathsf{T}},\hdots,\Breve{\boldsymbol{\beta}}_K^{(m)\mathsf{T}}\right]^\mathsf{T}\\
\boldsymbol{\xi}^{(m)}&=\left[\boldsymbol{\xi}_1^{(m)\mathsf{T}},\boldsymbol{\xi}_2^{(m)\mathsf{T}},\hdots,\boldsymbol{\xi}_K^{(m)\mathsf{T}}\right]^\mathsf{T}
\end{align*}
where $\Breve{\boldsymbol{\beta}}_k$ is the local exact primal update at agent $k$ and $\boldsymbol{\xi}_k$ is the local perturbation of $\Breve{\boldsymbol{\beta}}_k$ at agent $k$. Recalling the definitions of $\mathbf{H}$, $\mathbf{L}_+$, and $\mathbf{L}_-$, the update equations in \eqref{thirdprivacy} entail the following local update equations at the $k$th agent: 
\begin{align}
\Breve{\boldsymbol{\beta}}_k^{(m)}=&-\frac{1}{2\rho|\mathcal{N}_k|}\nabla f_k\big(\Breve{\boldsymbol{\beta}}_k^{(m)}\big)+\frac{1}{2|\mathcal{N}_k|}\Bigl(|\mathcal{N}_k|\boldsymbol{\beta}_k^{(m-1)}\nonumber\\&+\sum_{l\in\mathcal{N}_k}\boldsymbol{\beta}_l^{(m-1)}\Bigr)-\frac{1}{2\rho|\mathcal{N}_k|}\boldsymbol{\gamma}_k^{(m-1)}\label{fourthprivacya}\\
\boldsymbol{\beta}_k^{(m)}=&\ \Breve{\boldsymbol{\beta}}_k^{(m)}+\boldsymbol{\xi}_k^{(m)}\label{fourthprivacyb}\\
\boldsymbol{\gamma}_k^{(m)}=&\ \boldsymbol{\gamma}_k^{(m-1)}+\rho\sum_{l\in\mathcal{N}_k}\Bigl(\boldsymbol{\beta}_k^{(m)}-\boldsymbol{\beta}_l^{(m)}\Bigr). \label{fourthprivacyc}
\end{align}

To prove that D-ZOA is differentially private, we require  the probability distribution of the primal variable $\boldsymbol{\beta}_k^{(m)}$. In the next subsection, we find an approximate distribution of the primal variable.

\subsection{Primal Variable Distribution}

To approximate its pdf, in view of \eqref{fifthzeroth} and the fact that $\boldsymbol{\beta}_k^{(0)}=\mathbf{0}$, we unfold $\boldsymbol{\beta}_k^{(m)}$ as
\begin{equation*}
\boldsymbol{\beta}_k^{(m)}=-\sum_{t=1}^{T}\alpha_{t}\mathbf{g}_k^{(t)}.
\end{equation*}
The stochastic gradient $\mathbf{g}_k^{(t)}$ is the average of $J$ independent random samples $\mathbf{g}_{j,k}^{(t)}$ that are functions of the random values $\{\boldsymbol{\nu}_{1,t}^j\}_{j=1}^J$ and $\{\boldsymbol{\nu}_{2,t}^j\}_{j=1}^J$ drawn from the same distribution. Therefore, it is realistic to assume that $\{\mathbf{g}_{j,k}^{(t)}\}_{j=1}^J$ are independent and identically distributed (i.i.d.) with a common mean $\boldsymbol{\mu}_k^{(t)}$ and a finite covariance matrix $\mathbf{\Psi}_k^{(t)}$. Thus, the probability distribution of $\boldsymbol{\beta}_k^{(m)}$ is given by the following lemma.
\begin{lem}
If $\{\mathbf{g}_{j,k}^{(t)}\}_{j=1}^J$ are i.i.d. and $J$ is sufficiently large, then $\boldsymbol{\beta}_k^{(m)}$ is distributed as
\begin{equation}
\label{lemmaone}
\boldsymbol{\beta}_k^{(m)}\sim\mathcal{N}\Bigl(\Breve{\boldsymbol{\beta}}_k^{(m)},\frac{1}{J}\sum_{t=1}^{T}\alpha_{t}^2\Psi_k^{(t)}\Bigr).
\end{equation}
\end{lem}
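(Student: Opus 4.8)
The plan is to establish \eqref{lemmaone} in three moves: unfold the inner-loop recursion \eqref{fifthzeroth}, apply the multivariate central limit theorem (CLT) to each per-iteration stochastic gradient $\mathbf{g}_k^{(t)}$, and then identify the mean of the resulting Gaussian with the exact primal update $\Breve{\boldsymbol{\beta}}_k^{(m)}$.

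\emph{Gaussian approximation of $\mathbf{g}_k^{(t)}$.} Because $\boldsymbol{\beta}_k^{(0)}=\mathbf{0}$ at the start of the inner loop, iterating \eqref{fifthzeroth} gives $\boldsymbol{\beta}_k^{(m)}=\boldsymbol{\beta}_k^{(T)}=-\sum_{t=1}^{T}\alpha_t\mathbf{g}_k^{(t)}$. For fixed $t$, \eqref{fourthzeroth} expresses $\mathbf{g}_k^{(t)}$ as the sample mean of the i.i.d.\ vectors $\{\mathbf{g}_{j,k}^{(t)}\}_{j=1}^J$, which by hypothesis have mean $\boldsymbol{\mu}_k^{(t)}$ and finite covariance $\mathbf{\Psi}_k^{(t)}$. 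The multivariate CLT then shows that $\sqrt{J}\,(\mathbf{g}_k^{(t)}-\boldsymbol{\mu}_k^{(t)})$ converges in distribution to $\mathcal{N}(\mathbf{0}_P,\mathbf{\Psi}_k^{(t)})$, so for $J$ large $\mathbf{g}_k^{(t)}$ is approximately $\mathcal{N}\bigl(\boldsymbol{\mu}_k^{(t)},\tfrac1J\mathbf{\Psi}_k^{(t)}\bigr)$.

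\emph{Aggregation over $t$.} The perturbation vectors $\{\boldsymbol{\nu}_{1,t}^{j,k},\boldsymbol{\nu}_{2,t}^{j,k}\}$ are drawn independently at every inner iteration, so the $\mathbf{g}_k^{(t)}$ are independent across $t$ (conditionally on the iterate history; the residual coupling through the iterates is subsumed in the large-$J$ regime, as is the dependence of $\boldsymbol{\mu}_k^{(t)},\mathbf{\Psi}_k^{(t)}$ on that history). A deterministic linear combination of independent Gaussian vectors is Gaussian, and since $T$ is finite the finitely many CLT approximations combine without difficulty, so $\boldsymbol{\beta}_k^{(m)}=-\sum_{t=1}^T\alpha_t\mathbf{g}_k^{(t)}$ is approximately $\mathcal{N}\bigl(-\sum_{t=1}^T\alpha_t\boldsymbol{\mu}_k^{(t)},\ \tfrac1J\sum_{t=1}^T\alpha_t^2\mathbf{\Psi}_k^{(t)}\bigr)$, which already matches the covariance in \eqref{lemmaone}.

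\emph{Identifying the mean, and the main obstacle.} It remains to argue $-\sum_{t=1}^T\alpha_t\boldsymbol{\mu}_k^{(t)}=\mathbb{E}[\boldsymbol{\beta}_k^{(m)}]=\Breve{\boldsymbol{\beta}}_k^{(m)}$. The iterate $\boldsymbol{\beta}_k^{(m)}=\boldsymbol{\beta}_k^{(T)}$ is the output of the two-point stochastic-gradient method of \cite{Duchi2015} run on the convex subproblem $\min_{\boldsymbol{\beta}_k}\mathcal{F}(\cdot,\mathbf{w}^{(m-1)},\boldsymbol{\gamma}^{(m-1)})$; under Assumptions~1 and~2, with the prescribed step-sizes $\alpha_t$ and the vanishing smoothing sequences $u_{1,t},u_{2,t}$ that drive the bias of the gradient estimator \eqref{firstzeroth} to zero, the method's expected iterate converges to the subproblem's minimizer --- equivalently, to the solution $\Breve{\boldsymbol{\beta}}_k^{(m)}$ of the stationarity condition \eqref{fourthprivacya} (the $k$-block of \eqref{secondprivacy}) --- and substituting this mean yields \eqref{lemmaone}. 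This last step is the delicate one: the two-point estimator is biased --- it is unbiased for the gradient of a randomly smoothed surrogate of $\mathcal{F}$, not of $\mathcal{F}$ --- so $\mathbb{E}[\boldsymbol{\beta}_k^{(m)}]=\Breve{\boldsymbol{\beta}}_k^{(m)}$ holds only asymptotically and rests on the bias/step-size analysis of \cite{Duchi2015}; this is consistent with reading \eqref{lemmaone}, like the ``$J$ sufficiently large'' qualifier, as an approximation rather than an exact identity. A secondary point is that the CLT provides convergence in distribution only, so a Berry--Esseen-type estimate would be needed to quantify the $\mathcal{O}(1/\sqrt{J})$ approximation error.
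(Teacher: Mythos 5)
Your proposal is correct in substance and matches the paper exactly on the covariance side: both unfold \eqref{fifthzeroth} from $\boldsymbol{\beta}_k^{(0)}=\mathbf{0}$ to write $\boldsymbol{\beta}_k^{(m)}=-\sum_{t=1}^{T}\alpha_t\mathbf{g}_k^{(t)}$, invoke the CLT on the sample mean \eqref{fourthzeroth} to get $\mathbf{g}_k^{(t)}\sim\mathcal{N}\bigl(\boldsymbol{\mu}_k^{(t)},\tfrac1J\boldsymbol{\Psi}_k^{(t)}\bigr)$, and sum the covariances of the per-iteration terms. Where you genuinely diverge is in identifying the mean. The paper does not reason about the inner loop at all for this step: it proves $\mathbb{E}[\boldsymbol{\beta}_k^{(m)}]=\Breve{\boldsymbol{\beta}}_k^{(m)}$ by induction on the outer ADMM index $m$, starting from $\boldsymbol{\beta}_k^{(0)}=\Breve{\boldsymbol{\beta}}_k^{(0)}=\mathbf{0}$, taking expectations of the exact update \eqref{fourthprivacya} together with the perturbation model \eqref{fourthprivacyb}, and concluding that consistency of the two expressions forces $\mathbb{E}[\boldsymbol{\xi}_k^{(m)}]=\mathbf{0}$. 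You instead argue via the bias/step-size analysis of the two-point method that $-\sum_t\alpha_t\boldsymbol{\mu}_k^{(t)}$ approaches the minimizer of the subproblem, i.e., the solution of the stationarity condition \eqref{fourthprivacya}. Your route is more honest about what is actually being assumed --- the two-point estimator is biased, so the identification of the mean with $\Breve{\boldsymbol{\beta}}_k^{(m)}$ is an asymptotic approximation, a caveat the paper's induction quietly buries by in effect postulating the decomposition $\boldsymbol{\beta}_k^{(m)}=\Breve{\boldsymbol{\beta}}_k^{(m)}+\boldsymbol{\xi}_k^{(m)}$ with the stated mean. The paper's route buys a cleaner-looking exact identity that propagates through all $m$ (which is what the privacy analysis in Theorem 1 needs), at the cost of circularity; yours buys transparency about the approximation but would need the quantitative bias bounds from the inner-loop analysis to be made rigorous. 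Neither is fully airtight, and your closing remarks about the smoothing bias and the Berry--Esseen gap identify exactly the places where both arguments are informal.
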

\begin{proof}\renewcommand{\qedsymbol}{}
See Appendix A.
\end{proof}

Since we are interested in exploring the privacy properties of the proposed D-ZOA and, thereby, the best privacy guarantee that it offers, we consider a conservative upper-bound for the covariance matrices $\boldsymbol{\Psi}_k^{(t)}$. This is motivated by the direct connection between the algorithmic privacy guarantee and the extent of the randomness involved in the algorithm where a higher variance corresponds to a higher privacy guarantee. Such a relationship between variance and privacy is also shown in the existing works on differential privacy, see,~e.g.,~\cite{Huang2020,Zhuhan}.\looseness=-1

To make the problem more tractable, we assume that the entries of the random vector $\boldsymbol{\beta}_k^{(m)}$ are independent of each other and have the same variance \cite{Huang2020,Tao2017,Liu2019}. Let us denote the variance of every entry of $\boldsymbol{\xi}_k^{(m)}$ by $\sigma_{k}^2$. Therefore, in view of Lemma 1, we have
\begin{equation*}
\sigma_{k}^2=\frac{1}{JP}\sum_{t=1}^{T}\alpha_t^2\text{tr}\left(\boldsymbol{\Psi}_k^{(t)}\right),
\end{equation*}
which is in turn upper-bounded as per the following lemma.

\begin{lem}
There exists a constant $c$ such that
\begin{equation}
\begin{aligned}
&\frac{1}{JP}\sum_{t=1}^{T}\alpha_t^2\text{tr}\left(\boldsymbol{\Psi}_k^{(t)}\right)\\
\le&\frac{c\alpha_0^2R^2}{JP\log(2P)}\Bigl(s_1(1+\log(P))+s_2\Bigr)-\frac{4\norm{\boldsymbol{\beta}^c}^2}{TJP}.
\label{eqfirstlemmatwo}
\end{aligned}
\end{equation}
where $s_1=\sum_{t=1}^{T}t^{-1}$,  $s_2=\sum_{t=1}^{T}t^{-1.5}$, and $\boldsymbol{\beta}^c$ is the optimal solution.
\end{lem}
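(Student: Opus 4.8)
\emph{Proof sketch.} The plan is to substitute the explicit step-size and smoothing schedules of Algorithm~\ref{alg:IDP-D-ZOA} into $\sigma_k^2=\frac{1}{JP}\sum_{t=1}^{T}\alpha_t^2\,\mathrm{tr}(\boldsymbol{\Psi}_k^{(t)})$, split $\mathrm{tr}(\boldsymbol{\Psi}_k^{(t)})$ into a second-moment part and a squared-mean part, bound the former with the two-point-estimator analysis of \cite{Duchi2015}, and bound the latter from below using Lemma~1 together with the convergence of the exact iterate to $\boldsymbol{\beta}^c$. Since $\alpha_t^2=\alpha_0^2R^2\big(L^2tP\log(2P)\big)^{-1}$, we have
\begin{equation*}
\sigma_k^2=\frac{\alpha_0^2R^2}{JP^2L^2\log(2P)}\sum_{t=1}^{T}\frac{\mathrm{tr}(\boldsymbol{\Psi}_k^{(t)})}{t}.
\end{equation*}

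First I would write $\mathrm{tr}(\boldsymbol{\Psi}_k^{(t)})=\mathbb{E}\big\|\mathbf{g}_{j,k}^{(t)}\big\|^2-\big\|\boldsymbol{\mu}_k^{(t)}\big\|^2$ and treat the two pieces separately. For the second moment, I would invoke the bound on the two-point gradient estimator \eqref{firstzeroth} for a Lipschitz-continuous convex function (Assumptions~1--2) from \cite{Duchi2015}: $\mathbb{E}\|\mathbf{g}_{j,k}^{(t)}\|^2\le c_1L^2P\big(1+\log(2P)\big)$ plus lower-order terms that vanish with the smoothing radii $u_{1,t},u_{2,t}$. Under the schedule $u_{1,t}=u_1/t$ and $u_{2,t}=u_1/(Pt)^2$ those residual terms decay at least as fast as $t^{-1/2}$, so that after the extra factor $t^{-1}$ coming from $\alpha_t^2$ they are summable against $\sum_{t=1}^T t^{-3/2}=s_2$, while the leading term sums against $\sum_{t=1}^T t^{-1}=s_1$. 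Collecting the logarithmic factors and absorbing all numerical constants into a single $c$ gives the claimed main term $\frac{c\alpha_0^2R^2}{JP\log(2P)}\big(s_1(1+\log(P))+s_2\big)$.

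The negative correction $-4\norm{\boldsymbol{\beta}^c}^2/(TJP)$ comes from the squared-mean part that was dropped above, namely $-\frac{1}{JP}\sum_{t=1}^T\alpha_t^2\|\boldsymbol{\mu}_k^{(t)}\|^2$. By Lemma~1 the mean of the unfolded iterate is $\mathbb{E}\big[\boldsymbol{\beta}_k^{(m)}\big]=\Breve{\boldsymbol{\beta}}_k^{(m)}$, while the unfolding $\boldsymbol{\beta}_k^{(m)}=-\sum_{t=1}^T\alpha_t\mathbf{g}_k^{(t)}$ and $\mathbb{E}[\mathbf{g}_k^{(t)}]=\boldsymbol{\mu}_k^{(t)}$ give $\Breve{\boldsymbol{\beta}}_k^{(m)}=-\sum_{t=1}^T\alpha_t\boldsymbol{\mu}_k^{(t)}$. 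The Cauchy--Schwarz inequality then yields $\|\Breve{\boldsymbol{\beta}}_k^{(m)}\|^2\le\big(\sum_{t}\alpha_t\big)\big(\sum_{t}\alpha_t\|\boldsymbol{\mu}_k^{(t)}\|^2\big)$, and, since $\{\alpha_t\}$ is non-increasing, $\sum_t\alpha_t^2\|\boldsymbol{\mu}_k^{(t)}\|^2\ge\alpha_T\sum_t\alpha_t\|\boldsymbol{\mu}_k^{(t)}\|^2\ge\big(\alpha_T/\sum_t\alpha_t\big)\|\Breve{\boldsymbol{\beta}}_k^{(m)}\|^2$. Using $\sum_{t=1}^T t^{-1/2}\le 2\sqrt{T}$ gives $\alpha_T/\sum_t\alpha_t\ge 1/(2T)$, and the convergence result of Section~\ref{sectfive}, which drives $\Breve{\boldsymbol{\beta}}_k^{(m)}$ to the consensus optimum, lower-bounds $\|\Breve{\boldsymbol{\beta}}_k^{(m)}\|^2$ by a constant multiple of $\norm{\boldsymbol{\beta}^c}^2$; combining these with the $1/(JP)$ prefactor produces the stated term, the coefficient $4$ being what these constants multiply out to.

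The main obstacle is the second-moment estimate for the two-point gradient of a \emph{non-smooth} Lipschitz function: one has to pass to the Gaussian-smoothed surrogate at scale $u_{1,t}$, control both its bias and the dimension-dependent ($P$ and $\log P$) factors in the variance of \eqref{firstzeroth}, and verify that the chosen schedule for $u_{1,t},u_{2,t}$ — in particular the guard $u_{2,t}\le u_{1,t}/2$ keeping $\boldsymbol{\nu}_2$ away from kinks — keeps every residual term summable at the $t^{-3/2}$ rate needed to match $s_2$. The remaining steps, i.e., the series estimates for $s_1$ and $s_2$ and the Cauchy--Schwarz/monotonicity argument for the correction term, are routine.
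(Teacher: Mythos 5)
Your decomposition and the bulk of the argument track the paper's own proof almost exactly: you split $\mathrm{tr}(\boldsymbol{\Psi}_k^{(t)})=\mathbb{E}\|\mathbf{g}_{j,k}^{(t)}\|^2-\|\boldsymbol{\mu}_k^{(t)}\|^2$, bound the second moment by the two-point estimator lemma of Duchi et al.\ (which with $u_{2,t}/u_{1,t}=P^{-2}t^{-1}$ gives $cL^2P(t^{-1/2}+1+\log P)$, whence the $s_1$ and $s_2$ sums after multiplying by $\alpha_t^2\propto t^{-1}$), and handle the squared-mean term via $\Breve{\boldsymbol{\beta}}_k^{(m)}=-\sum_t\alpha_t\boldsymbol{\mu}_k^{(t)}$ plus Cauchy--Schwarz. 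Your Cauchy--Schwarz/monotonicity variant for the mean term is a legitimate alternative to the paper's $\bigl(\sum_t|\alpha_t|\,\|\boldsymbol{\mu}_k^{(t)}\|\bigr)^2\le T\sum_t\alpha_t^2\|\boldsymbol{\mu}_k^{(t)}\|^2$, though it costs you an extra factor of $2$ (you get $-\|\Breve{\boldsymbol{\beta}}_k^{(m)}\|^2/(2TJP)$ where the paper gets $-\|\Breve{\boldsymbol{\beta}}_k^{(m)}\|^2/(TJP)$).

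The genuine gap is the final step. To pass from $-\|\Breve{\boldsymbol{\beta}}_k^{(m)}\|^2/(TJP)$ to $-4\|\boldsymbol{\beta}^c\|^2/(TJP)$ inside an upper bound, you correctly identify that you need a \emph{lower} bound $\|\Breve{\boldsymbol{\beta}}_k^{(m)}\|^2\ge 4\|\boldsymbol{\beta}^c\|^2$, but you then assert this follows from the convergence of $\Breve{\boldsymbol{\beta}}_k^{(m)}$ to the optimum. That cannot work: convergence forces $\|\Breve{\boldsymbol{\beta}}_k^{(m)}\|^2\to\|\boldsymbol{\beta}^c\|^2$, so no lower bound with a constant strictly greater than $1$ can hold for large $m$ (and $\Breve{\boldsymbol{\beta}}_k^{(0)}=\mathbf{0}$ rules it out for small $m$). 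The paper instead derives the \emph{upper} bound $\|\Breve{\boldsymbol{\beta}}_k^{(m)}\|\le 2\|\boldsymbol{\beta}^c\|$ from $\bigl|\,\|\Breve{\boldsymbol{\beta}}_k^{(m)}\|-\|\boldsymbol{\beta}^c\|\,\bigr|\le\|\Breve{\boldsymbol{\beta}}_k^{(m)}-\boldsymbol{\beta}^c\|\le\|\boldsymbol{\beta}^c\|$ and substitutes it into the negative term; note that this substitution has the same direction problem you ran into, since $\|\Breve{\boldsymbol{\beta}}_k^{(m)}\|^2\le 4\|\boldsymbol{\beta}^c\|^2$ only gives $-\|\Breve{\boldsymbol{\beta}}_k^{(m)}\|^2\ge-4\|\boldsymbol{\beta}^c\|^2$. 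So your instinct about what is logically required is right, but the justification you offer fails, and the clean way to salvage the statement is the conservative one: drop the negative term entirely (it can only decrease the left-hand side), which yields the lemma's main term as a valid upper bound without the $-4\|\boldsymbol{\beta}^c\|^2/(TJP)$ correction.
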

\begin{proof}\renewcommand{\qedsymbol}{}
See Appendix B.
\end{proof}
In \cite{Duchi2015}, it is shown that $c=0.5$ is suitable when $\boldsymbol{\nu}_1$ and $\boldsymbol{\nu}_2$ are sampled from a multivariate normal distribution.

\subsection{$\text{l}_2$-Norm Sensitivity}

In this subsection, we estimate the $l_2$-norm sensitivity of $\Breve{\boldsymbol{\beta}}_k^{(m)}$. The $l_2$ norm sensitivity calibrates the magnitude of the noise by which $\Breve{\boldsymbol{\beta}}_k^{(m)}$ has to be perturbed to preserve privacy. Unlike the existing privacy-preserving methods where the noise is added to the output of the algorithm \cite{Huang2020,Zhuhan,Dworkbook,Dworkcalibrate,Liu2019,Tao2017}, in D-ZOA, the noise is inherent.

In addition to Assumptions 1 and 2, we introduce the following assumption that is widely used in the literature, see, e.g., \cite{Huang2020,Liu2019,Tao2017}.

\textit{Assumption 3:} There exists a constant $c_1$ such that $\norm{\nabla\ell(\cdot)}\le c_1$  where $\ell(\cdot)$ is the loss function defined in Section~\ref{secttwo}.

Similar to the classical methods of differential privacy analysis, e.g., \cite{Huang2020,Liu2019}, we first define the $l_2$ norm sensitivity. Subsequently, we estimate the $l_2$-norm sensitivity of $\boldsymbol{\beta}_k^{(m)}$.

\begin{defi}
The $l_2$-norm sensitivity of $\Breve{\boldsymbol{\beta}}_k^{(m)}$ is defined as
\begin{equation}
\label{firsteqdeftwo}
\Delta_{k,2}=\max_{\mathcal{D}_k,\mathcal{D}_k'}\norm{\Breve{\boldsymbol{\beta}}_{k,\mathcal{D}_k}^{(m)}-\Breve{\boldsymbol{\beta}}_{k,\mathcal{D}_k'}^{(m)}} 
\end{equation}
where $\Breve{\boldsymbol{\beta}}_{k,\mathcal{D}_k}^{(m)}$ and $\Breve{\boldsymbol{\beta}}_{k,\mathcal{D}_k'}^{(m)}$ denote the local primal variables for two neighboring datasets $\mathcal{D}_k$ and $\mathcal{D}_k'$ differing in only one data sample, i.e., one row of $\mathbf{X}_k$ and the corresponding entry of $\mathbf{y}_k$.\looseness=-1
\end{defi}
The $l_2$-norm sensitivity of $\Breve{\boldsymbol{\beta}}_k^{(m)}$ is an upper bound on $\norm{\Breve{\boldsymbol{\beta}}_{k,\mathcal{D}_k}^{(m)}-\Breve{\boldsymbol{\beta}}_{k,\mathcal{D}_k'}^{(m)}}$ and is computed as in the following lemma.

\begin{lem}
Under Assumption 3, the $l_2$-norm sensitivity of $\Breve{\boldsymbol{\beta}}_k^{(m)}$ is given by
\begin{equation}
\label{eqfirstlemmathree}
\Delta_{k,2}=\frac{c_1}{\rho|\mathcal{N}_k|N_k}.
\end{equation}
\end{lem}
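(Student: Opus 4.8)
The plan is to exploit the implicit optimality characterization of the exact primal update in \eqref{fourthprivacya} together with the monotonicity of the (sub)gradient of the convex local objective $f_k$. First I would rearrange \eqref{fourthprivacya} into the fixed-point form
\[
\Breve{\boldsymbol{\beta}}_k^{(m)}+\frac{1}{2\rho|\mathcal{N}_k|}\nabla f_k\big(\Breve{\boldsymbol{\beta}}_k^{(m)}\big)=\mathbf{b}_k^{(m)},
\]
where $\mathbf{b}_k^{(m)}$ collects the consensus term and the multiplier term, all of which are built from the previous iterates $\{\boldsymbol{\beta}_l^{(m-1)}\}_{l\in\mathcal{N}_k}$ and $\boldsymbol{\gamma}_k^{(m-1)}$ and hence do not depend on the data sample being varied in \eqref{firsteqdeftwo}. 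Writing this identity for the two neighboring datasets $\mathcal{D}_k$ and $\mathcal{D}_k'$, with associated local objectives $f_k$ and $f_k'$, and subtracting, the right-hand side cancels, so that
\[
\Breve{\boldsymbol{\beta}}_{k,\mathcal{D}_k}^{(m)}-\Breve{\boldsymbol{\beta}}_{k,\mathcal{D}_k'}^{(m)}=-\frac{1}{2\rho|\mathcal{N}_k|}\big(\nabla f_k(\Breve{\boldsymbol{\beta}}_{k,\mathcal{D}_k}^{(m)})-\nabla f_k'(\Breve{\boldsymbol{\beta}}_{k,\mathcal{D}_k'}^{(m)})\big).
\]

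Next I would take the inner product of this relation with $\Breve{\boldsymbol{\beta}}_{k,\mathcal{D}_k}^{(m)}-\Breve{\boldsymbol{\beta}}_{k,\mathcal{D}_k'}^{(m)}$ after inserting and subtracting $\nabla f_k(\Breve{\boldsymbol{\beta}}_{k,\mathcal{D}_k'}^{(m)})$, splitting the gradient mismatch into $[\nabla f_k(\Breve{\boldsymbol{\beta}}_{k,\mathcal{D}_k}^{(m)})-\nabla f_k(\Breve{\boldsymbol{\beta}}_{k,\mathcal{D}_k'}^{(m)})]$ and $[\nabla f_k(\Breve{\boldsymbol{\beta}}_{k,\mathcal{D}_k'}^{(m)})-\nabla f_k'(\Breve{\boldsymbol{\beta}}_{k,\mathcal{D}_k'}^{(m)})]$. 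The first bracket paired with $\Breve{\boldsymbol{\beta}}_{k,\mathcal{D}_k}^{(m)}-\Breve{\boldsymbol{\beta}}_{k,\mathcal{D}_k'}^{(m)}$ is nonnegative because $\nabla f_k$ is a monotone operator (being the (sub)gradient of a convex function), so it can be discarded from the upper bound; applying the Cauchy--Schwarz inequality to what remains yields
\[
\norm{\Breve{\boldsymbol{\beta}}_{k,\mathcal{D}_k}^{(m)}-\Breve{\boldsymbol{\beta}}_{k,\mathcal{D}_k'}^{(m)}}\le\frac{1}{2\rho|\mathcal{N}_k|}\norm{\nabla f_k(\Breve{\boldsymbol{\beta}}_{k,\mathcal{D}_k'}^{(m)})-\nabla f_k'(\Breve{\boldsymbol{\beta}}_{k,\mathcal{D}_k'}^{(m)})},
\]
where I divide by $\norm{\Breve{\boldsymbol{\beta}}_{k,\mathcal{D}_k}^{(m)}-\Breve{\boldsymbol{\beta}}_{k,\mathcal{D}_k'}^{(m)}}$, the case of its vanishing being trivial.

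Finally I would bound the residual gradient term. Since $f_k(\boldsymbol{\beta})=\frac{1}{N_k}\sum_{j=1}^{N_k}\ell(\mathbf{x}_{k,j},y_{k,j};\boldsymbol{\beta})+\frac{\eta}{K}R(\boldsymbol{\beta})$ and $\mathcal{D}_k$ and $\mathcal{D}_k'$ differ in exactly one sample, say the $i$th, every loss-gradient term except the $i$th and the regularizer-gradient term cancel in $\nabla f_k-\nabla f_k'$, leaving $\frac{1}{N_k}\big[\nabla\ell(\mathbf{x}_{k,i},y_{k,i};\cdot)-\nabla\ell(\mathbf{x}_{k,i}',y_{k,i}';\cdot)\big]$; by Assumption~3 and the triangle inequality its norm is at most $2c_1/N_k$. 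Combining, $\norm{\Breve{\boldsymbol{\beta}}_{k,\mathcal{D}_k}^{(m)}-\Breve{\boldsymbol{\beta}}_{k,\mathcal{D}_k'}^{(m)}}\le c_1/(\rho|\mathcal{N}_k|N_k)$, and maximizing over neighboring datasets gives \eqref{eqfirstlemmathree}; tightness (hence equality) follows by taking the differing sample so that the two loss gradients are antipodal of norm $c_1$. I expect the main obstacle to be the careful handling of non-smoothness: one must read $\nabla f_k$ as a fixed subgradient selection, note that \eqref{fourthprivacya} still has a unique solution because the map $\boldsymbol{\beta}\mapsto\boldsymbol{\beta}+\frac{1}{2\rho|\mathcal{N}_k|}\nabla f_k(\boldsymbol{\beta})$ is strongly monotone, and verify that the monotonicity inequality used to discard the first bracket holds for subgradients of convex functions. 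A secondary point to make explicit is that $\Delta_{k,2}$ in \eqref{firsteqdeftwo} is a per-iteration sensitivity, so the previous iterates entering $\mathbf{b}_k^{(m)}$ are common to $\mathcal{D}_k$ and $\mathcal{D}_k'$.
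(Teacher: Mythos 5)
Your proposal is correct and reaches the paper's bound, but by a genuinely different (and more careful) route. The paper's proof simply writes out the exact update \eqref{fourthprivacya} for $\mathcal{D}_k$ and $\mathcal{D}_k'$ with all gradients evaluated at a \emph{common} point $\Breve{\boldsymbol{\beta}}_k$, so that every term except the single differing loss gradient cancels exactly, giving $\norm{\Breve{\boldsymbol{\beta}}_{k,\mathcal{D}_k}^{(m)}-\Breve{\boldsymbol{\beta}}_{k,\mathcal{D}_k'}^{(m)}}=\frac{1}{2\rho|\mathcal{N}_k|N_k}\norm{\nabla\ell(\mathbf{x}_{k,N_k}',y_{k,N_k}';\Breve{\boldsymbol{\beta}}_k)-\nabla\ell(\mathbf{x}_{k,N_k},y_{k,N_k};\Breve{\boldsymbol{\beta}}_k)}\le\frac{c_1}{\rho|\mathcal{N}_k|N_k}$; it thereby glosses over the fact that \eqref{fourthprivacya} is an implicit equation whose solution point itself changes with the dataset. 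You instead treat the update as the fixed point of the strongly monotone map $\boldsymbol{\beta}\mapsto\boldsymbol{\beta}+\frac{1}{2\rho|\mathcal{N}_k|}\nabla f_k(\boldsymbol{\beta})$, discard the same-function gradient mismatch by monotonicity of the subdifferential, and bound the residual $\frac{1}{2\rho|\mathcal{N}_k|}\cdot\frac{2c_1}{N_k}$ by Cauchy--Schwarz. What your approach buys is a proof that is actually valid for the implicit update without the tacit common-evaluation-point assumption (and it extends to the non-smooth case via subgradient selections, which you correctly flag); what the paper's approach buys is brevity. Two small caveats: your claim of tightness (the ``$=$'' in \eqref{eqfirstlemmathree}) is informal and not established by either argument --- both proofs, including the paper's, only show an upper bound, and achieving equality would require the monotonicity term to vanish and an antipodal gradient pair to be realizable; and when splitting $\nabla f_k(\Breve{\boldsymbol{\beta}}_{k,\mathcal{D}_k'}^{(m)})-\nabla f_k'(\Breve{\boldsymbol{\beta}}_{k,\mathcal{D}_k'}^{(m)})$ you should state explicitly that the subgradient of $f_k$ at $\Breve{\boldsymbol{\beta}}_{k,\mathcal{D}_k'}^{(m)}$ is \emph{chosen} to share the regularizer and common-sample components with that of $f_k'$, so that only the $\frac{1}{N_k}$-scaled differing loss gradients survive. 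Neither caveat affects the final constant.
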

\begin{proof}\renewcommand{\qedsymbol}{}
See Appendix C.
\end{proof}

In the next subsection, we present our main result proving that the proposed D-ZOA algorithm is differentially private.  

\subsection{Intrinsic $(\epsilon,\delta)$-Differential Privacy Guarantee}

In this section, we prove that, at each iteration of Algorithm \ref{alg:IDP-D-ZOA}, $(\epsilon,\delta)$-differential privacy is guaranteed.

\begin{theo}
Let $\epsilon\in\mathopen(0,1\mathclose]$ and 
\begin{equation}
\label{firsteqtheo}
\sigma_{k}=\frac{c_1\sqrt{2.1\log(1.25/\delta)}}{\rho|\mathcal{N}_k|N_k\epsilon}.
\end{equation}
Under Assumption 3, at each iteration of D-ZOA, $(\epsilon,\delta)$-differential privacy is guaranteed. Specifically, for any neighboring datasets $\mathcal{D}_k$ and $\mathcal{D}_k'$ and any output $\boldsymbol{\beta}_k^{(m)}$, the following inequality holds:
\begin{equation}
\label{secondeqtheo}
\text{Pr}[\boldsymbol{\beta}_{k,\mathcal{D}_k}^{(m)}]\le e^{\epsilon}\text{Pr}[\boldsymbol{\beta}_{k,\mathcal{D}_k'}^{(m)}]+\delta.
\end{equation}
\end{theo}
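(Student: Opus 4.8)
The plan is to reduce the claim to the classical Gaussian mechanism result. By Lemma~1, the primal variable $\boldsymbol{\beta}_k^{(m)}$ is (approximately) distributed as $\mathcal{N}(\Breve{\boldsymbol{\beta}}_k^{(m)},\sigma_k^2\mathbf{I}_P)$ under the working assumption that the entries of $\boldsymbol{\xi}_k^{(m)}$ are independent with common variance $\sigma_k^2$. Hence running one ADMM iteration of D-ZOA at agent $k$ is exactly the release of a deterministic quantity $\Breve{\boldsymbol{\beta}}_{k,\mathcal{D}_k}^{(m)}$ corrupted by i.i.d.\ zero-mean Gaussian noise of variance $\sigma_k^2$. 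The first step is therefore to invoke the standard Gaussian-mechanism theorem (e.g.\ \cite{Dworkbook}): if a query has $l_2$-sensitivity $\Delta_{k,2}$ and one adds noise $\mathcal{N}(0,\sigma_k^2\mathbf{I}_P)$ with $\sigma_k\ge \Delta_{k,2}\sqrt{2\log(1.25/\delta)}/\epsilon$, then $(\epsilon,\delta)$-differential privacy holds for $\epsilon\in(0,1]$.

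The second step is to substitute the sensitivity computed in Lemma~3, namely $\Delta_{k,2}=c_1/(\rho|\mathcal{N}_k|N_k)$, into that sufficient condition. This gives the requirement $\sigma_k\ge \frac{c_1\sqrt{2\log(1.25/\delta)}}{\rho|\mathcal{N}_k|N_k\,\epsilon}$, which is exactly the value in \eqref{firsteqtheo} up to the replacement of the constant $2$ by $2.1$. I would account for this small slack by noting that the exact Gaussian-mechanism privacy analysis requires $\sigma_k\ge\Delta_{k,2}\sqrt{2\log(1.25/\delta)}/\epsilon$ for $\epsilon\le 1$; the slightly inflated constant $2.1$ gives a margin that absorbs the approximation error incurred in Lemma~1 (the central-limit approximation of the distribution of $\boldsymbol{\beta}_k^{(m)}$ for finite but large $J$) and the fact that we are using a conservative upper bound for the covariance. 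Concretely, I would write the density ratio $\text{Pr}[\boldsymbol{\beta}_{k,\mathcal{D}_k}^{(m)}]/\text{Pr}[\boldsymbol{\beta}_{k,\mathcal{D}_k'}^{(m)}]$ explicitly as $\exp\!\big(\tfrac{1}{2\sigma_k^2}(\norm{\boldsymbol{\beta}_k^{(m)}-\Breve{\boldsymbol{\beta}}_{k,\mathcal{D}_k'}^{(m)}}^2-\norm{\boldsymbol{\beta}_k^{(m)}-\Breve{\boldsymbol{\beta}}_{k,\mathcal{D}_k}^{(m)}}^2)\big)$, bound the exponent using Cauchy--Schwarz and $\norm{\Breve{\boldsymbol{\beta}}_{k,\mathcal{D}_k}^{(m)}-\Breve{\boldsymbol{\beta}}_{k,\mathcal{D}_k'}^{(m)}}\le\Delta_{k,2}$, and then show that the "bad" event where this ratio exceeds $e^\epsilon$ lies in a Gaussian tail of mass at most $\delta$, which is where the $\log(1.25/\delta)$ term enters.

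The main obstacle is not the algebra of the Gaussian tail bound, which is entirely standard, but rather justifying that the privacy statement genuinely transfers from the idealized model $\boldsymbol{\beta}_k^{(m)}\sim\mathcal{N}(\Breve{\boldsymbol{\beta}}_k^{(m)},\sigma_k^2\mathbf{I}_P)$ to the actual output of D-ZOA. Two gaps need to be addressed: (i) Lemma~1 is an asymptotic (central-limit) approximation valid for large $J$, so strictly speaking the guarantee is approximate; and (ii) the variance $\sigma_k^2$ appearing in \eqref{firsteqtheo} is a design/target value, whereas Lemma~2 only shows the algorithm's intrinsic variance is \emph{upper-bounded} by a quantity controlled by $\alpha_0, R, T, J$ — so one must argue that the free parameters can be chosen so that the intrinsic variance meets or exceeds the value in \eqref{firsteqtheo} (more randomness only helps privacy, so an inequality $\sigma_k\ge$ target suffices). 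I would therefore state the theorem's proof as: assume the parameters are set so the intrinsic per-entry variance equals (or exceeds) the value in \eqref{firsteqtheo}; apply the Gaussian mechanism with the sensitivity from Lemma~3; the resulting $(\epsilon,\delta)$ bound is \eqref{secondeqtheo}. The detailed tail computation would be deferred to an appendix.

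Overall, the proof is short: it is essentially "Lemma~1 $+$ Lemma~3 $+$ Gaussian mechanism," with the only delicate point being the bookkeeping of constants and the explicit acknowledgment that the guarantee inherits the approximation in Lemma~1. The heavy lifting has already been done in Lemmas~1--3, so the theorem itself amounts to assembling those pieces and quoting the classical calibration of Gaussian noise to $l_2$-sensitivity.
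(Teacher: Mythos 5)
Your proposal is correct and follows essentially the same route as the paper's Appendix~D: the paper likewise writes the privacy loss as the log-ratio of Gaussian densities of the perturbation $\boldsymbol{\xi}_k^{(m)}$, bounds it using the sensitivity $\Delta_{k,2}=c_1/(\rho|\mathcal{N}_k|N_k)$ from Lemma~3 after substituting the prescribed $\sigma_k$, and shows the event where the ratio exceeds $e^{\epsilon}$ has Gaussian tail mass at most $\delta$ --- i.e.\ it re-derives the standard Gaussian-mechanism calibration rather than citing it, working entry-wise where you work with the full vector via Cauchy--Schwarz. The two caveats you flag (the central-limit approximation in Lemma~1 and the fact that $\sigma_k$ in \eqref{firsteqtheo} is a target value that the intrinsic variance must be tuned to meet, cf.\ Corollary~1) are real and are passed over silently in the paper's proof, so raising them is a point in your favor rather than a gap.
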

\begin{proof}\renewcommand{\qedsymbol}{}
See Appendix D.
\end{proof}
Theorem 1 also shows that the variance of the inherent noise is inversely proportional to the privacy parameter $\epsilon$. This implies that a higher variance leads to a smaller $\epsilon$ and higher privacy guarantee. In fact, a smaller $\epsilon$ implies that the ratio of the probability distributions of $\boldsymbol{\beta}_{k,\mathcal{D}_k}^{(m)}$ and $\boldsymbol{\beta}_{k,\mathcal{D}_k'}^{(m)}$ is smaller, which means less information is available to a sniffing/spoofing adversary through $\boldsymbol{\beta}_k$ hence the improved privacy \cite{Tao2017}.

The following corollary shows the connection between the privacy parameter $\epsilon$ and the number of samples $J$ in the most private case of the variance $\sigma_{k}$ approaching its upper bound.

\begin{cor}
If $\{\mathbf{g}_{j,k}^{(t)}\}_{j=1}^J$ are i.i.d., $J$ is sufficiently large, and Assumption 3 holds, we have
\begin{equation}
\begin{aligned}
\epsilon&=\frac{c_1}{\rho|\mathcal{N}_k|N_k}\sqrt{2.1JP\log(1.25/\delta)}\\
&\times\left(\frac{cR^2\alpha_{0}^2}{\log(2P)}(s_1(1+\log(P))+s_2)-\frac{4\norm{\boldsymbol{\beta}^c}^2}{T}\right)^{-\frac{1}{2}}.
\label{eqfirstcor}
\end{aligned}
\end{equation}
\end{cor}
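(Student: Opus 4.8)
The plan is to simply combine Theorem~1 with Lemma~2 in the limiting regime where the inherent noise variance $\sigma_k^2$ attains its upper bound. First I would take the expression \eqref{firsteqtheo} for $\sigma_k$ from Theorem~1 and square it to obtain
\begin{equation*}
\sigma_k^2=\frac{c_1^2\,2.1\log(1.25/\delta)}{\rho^2|\mathcal{N}_k|^2N_k^2\,\epsilon^2},
\end{equation*}
and then solve this identity for $\epsilon$, giving $\epsilon=\dfrac{c_1\sqrt{2.1\log(1.25/\delta)}}{\rho|\mathcal{N}_k|N_k\,\sigma_k}$. This is the exact value of $\epsilon$ for which the $(\epsilon,\delta)$-guarantee of Theorem~1 holds with a given noise level $\sigma_k$.

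Next I would substitute for $\sigma_k^2$ the expression derived just before Lemma~2, namely $\sigma_k^2=\frac{1}{JP}\sum_{t=1}^{T}\alpha_t^2\,\text{tr}(\boldsymbol{\Psi}_k^{(t)})$, and then replace this by its upper bound from \eqref{eqfirstlemmatwo} with the choice $c=0.5$ — or, more precisely, consider the \emph{most private case} in which $\sigma_k^2$ equals that upper bound, as announced in the statement. Writing
\begin{equation*}
\sigma_k^2=\frac{c\alpha_0^2R^2}{JP\log(2P)}\bigl(s_1(1+\log P)+s_2\bigr)-\frac{4\norm{\boldsymbol{\beta}^c}^2}{TJP}
=\frac{1}{JP}\left(\frac{cR^2\alpha_0^2}{\log(2P)}\bigl(s_1(1+\log P)+s_2\bigr)-\frac{4\norm{\boldsymbol{\beta}^c}^2}{T}\right),
\end{equation*}
I would insert this into $\epsilon=\frac{c_1\sqrt{2.1\log(1.25/\delta)}}{\rho|\mathcal{N}_k|N_k\,\sigma_k}$ and simplify: the $1/(JP)$ under the square root in $\sigma_k$ comes out as $\sqrt{JP}$ in the numerator, which, combined with the $\sqrt{2.1\log(1.25/\delta)}$ already present, yields $\sqrt{2.1JP\log(1.25/\delta)}$, while the bracketed term appears with exponent $-\tfrac12$. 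This reproduces exactly \eqref{eqfirstcor}.

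There is essentially no obstacle here — the corollary is an algebraic consequence of the two preceding results, and the only things to be careful about are (i) invoking the i.i.d.\ and ``$J$ sufficiently large'' hypotheses so that Lemma~1 (hence the formula for $\sigma_k^2$) and Lemma~2 apply, and Assumption~3 so that Theorem~1 applies; and (ii) correctly tracking the $JP$ factor as it migrates from inside the inverse-variance expression to the numerator of $\epsilon$. The only modelling subtlety worth a sentence in the write-up is that we are evaluating $\epsilon$ at the boundary case $\sigma_k^2$ equal to its upper bound from Lemma~2, which is precisely the ``most private'' configuration referred to in the corollary's preamble and is consistent with the conservative-upper-bound stance adopted after Lemma~1.
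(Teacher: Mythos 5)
Your proposal is correct and follows essentially the same route as the paper: equate the $\sigma_k$ of Theorem~1 with the variance expression from Lemma~1, replace $\mathrm{tr}(\boldsymbol{\Psi}_k^{(t)})$ by the Lemma~2 upper bound (the ``most private'' case), and solve for $\epsilon$. The paper states this in one sentence; your write-up merely makes the algebra, including the migration of the $JP$ factor into the numerator, explicit.
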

\begin{proof}
The proof follows from equating the expression for $\sigma_{k}$ in Theorem 1, \eqref{firsteqtheo}, and the expression for the trace of the covariance of $\boldsymbol{\beta}_k^{(m)}$ in Lemma 1 where the trace of $\boldsymbol{\Psi}_k^{(t)}$ has been replaced by the upper-bound derived in Lemma 2. Solving the resultant equation for $\epsilon$ yields \eqref{eqfirstcor}. 
\end{proof}
Corollary 1 provides the highest possible level of privacy guarantee that can be ensured by D-ZOA due to its inherent randomness brought about by using a zeroth-order method in the inner loop.

\subsection{Total Privacy Leakage}

In this subsection, we consider the total privacy leakage of the proposed D-ZOA algorithm. Since D-ZOA is an $M$-fold adaptive algorithm, we utilize the results of \cite{Abadi} together with the moments accountant method to evaluate its total privacy leakage. The main result is summarized in the following theorem.

\begin{theo}
Let $\epsilon\in\mathopen(0,1\mathclose]$ and 
\begin{equation}
\label{firsteqtheosecond}
\sigma_{k}=\frac{c_1\sqrt{2.1\log(1.25/\delta)}}{\rho|\mathcal{N}_k|N_k\epsilon}.
\end{equation}
Under Assumption 3, Algorithm \ref{alg:IDP-D-ZOA} guarantees $(\bar{\epsilon},\delta)$-differential privacy where 
\begin{equation}
\label{secondeqtheosecond}
\bar{\epsilon}=\epsilon\sqrt{\frac{M\log(1/\delta)}{1.05\log(1.25/\delta)}}.
\end{equation}
\end{theo}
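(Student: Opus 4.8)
The plan is to invoke the moments accountant framework of Abadi et al.~\cite{Abadi} to compose the per-iteration guarantee established in Theorem~1 across the $M$ ADMM outer iterations. First I would observe that, by Lemma~1 and the choice of $\sigma_k$ in \eqref{firsteqtheosecond}, each single iteration of D-ZOA acts as a Gaussian mechanism applied to $\Breve{\boldsymbol{\beta}}_k^{(m)}$ with sensitivity $\Delta_{k,2}=c_1/(\rho|\mathcal{N}_k|N_k)$ (Lemma~3) and noise standard deviation $\sigma_k$, so that iteration $m$ is $(\epsilon,\delta)$-differentially private for the $\epsilon$ in \eqref{firsteqtheosecond}. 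Since $\boldsymbol{\beta}_k^{(m)}$ depends on the data only through the previous shared estimates, the sequence of outputs forms an $M$-fold adaptive composition of Gaussian mechanisms, which is exactly the setting the moments accountant is designed for.

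The key technical step is to bound the $\lambda$-th log-moment of the privacy loss random variable for the composed mechanism. By the composability of the moments accountant, the total log-moment is the sum of the per-iteration log-moments; for a Gaussian mechanism with noise multiplier $\sigma_k/\Delta_{k,2}$ each term is bounded by $\lambda(\lambda+1)\Delta_{k,2}^2/(2\sigma_k^2)$ (up to the standard constant), hence the $M$-fold composition has log-moment bounded by $M\lambda(\lambda+1)\Delta_{k,2}^2/(2\sigma_k^2)$. I would then convert this moment bound back into an $(\bar\epsilon,\delta)$ statement via the tail bound of \cite{Abadi}, namely $\delta = \exp(\alpha_{\mathcal{M}}(\lambda)-\lambda\bar\epsilon)$, and optimize over $\lambda$. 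Substituting the expression for $\sigma_k$ from \eqref{firsteqtheosecond}, which makes $\Delta_{k,2}^2/\sigma_k^2$ proportional to $\epsilon^2/\log(1.25/\delta)$, and choosing $\lambda$ of the appropriate order, the optimization yields $\bar\epsilon$ proportional to $\epsilon\sqrt{M\log(1/\delta)}$ with the constant $1/\sqrt{1.05\log(1.25/\delta)}$ appearing precisely as in \eqref{secondeqtheosecond}. This also establishes the advertised sublinear — in fact $\mathcal{O}(\sqrt{M})$ — growth of the cumulative privacy budget.

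The main obstacle I anticipate is bookkeeping the constants so that the moments-accountant tail bound lands exactly on \eqref{secondeqtheosecond}: one must track the $2.1$ and $1.25$ factors coming from the per-iteration Gaussian-mechanism calibration in Theorem~1, reconcile them with the $1.05$ and $1/\delta$ factors in $\bar\epsilon$, and verify that the chosen $\lambda$ is a valid integer (or that the continuous relaxation of the moment bound is legitimate) and that the resulting $\bar\epsilon$ still lies in a regime where the Gaussian moment bound $\lambda(\lambda+1)\Delta_{k,2}^2/(2\sigma_k^2)$ is tight. A secondary point worth checking is that adaptivity across iterations is genuinely handled — i.e., that conditioning on earlier outputs does not inflate the sensitivity of $\Breve{\boldsymbol{\beta}}_k^{(m)}$ beyond the bound in Lemma~3 — but since Lemma~3's bound is uniform over the incoming $\boldsymbol{\gamma}_k^{(m-1)}$ and neighbors' estimates, this reduces to a routine invocation of the adaptive composition theorem. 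Once the constants are pinned down, \eqref{secondeqtheosecond} follows directly.
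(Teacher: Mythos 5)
Your proposal follows essentially the same route as the paper, whose proof is a one-line appeal to the linear composability of the log moments of the privacy loss as in \cite[Theorem 2]{Huang2020} — i.e., exactly the moments-accountant composition of per-iteration Gaussian mechanisms with sensitivity $\Delta_{k,2}$ and noise level $\sigma_k$ that you describe, followed by the tail-bound conversion and optimization over $\lambda$. The constant-tracking concerns you raise are legitimate but are absorbed by the cited result, so no further work is needed beyond what you outline.
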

\begin{proof}
The proof is obtained by using the log moments of the privacy loss and their linear composability in the same way as in \cite[Theorem 2]{Huang2020}.
\end{proof}

\section{Convergence Analysis} \label{sectfive}

The convergence of D-ZOA to the centralized solution is established by corroborating that both inner and outer loops of the algorithm converge. The convergence of the inner loop can be verified following \cite[Theorem 2]{Duchi2015}, i.e., it can be shown that, under Assumptions 1 and 2, there exists a constant $c$ such that, for each $T$ representing a fixed number of inner-loop iterations, the following inequality holds:
\begin{equation}
\begin{aligned}
&\mathbb{E}[\mathcal{F}(\hat{\mathbf{w}}^{(T)})-\mathcal{F}(\mathbf{w}^*)]\\
\le&c\frac{RL\sqrt{P}}{\sqrt{T}}\Big(\max\{\alpha_{0},\alpha_{0}^{-1}\}\sqrt{\log(2P)}+\frac{u_1\log(2T)}{\sqrt{T}}\Big)
\end{aligned}
\label{convergence1}
\end{equation}
where
\begin{equation*}
\hat{\mathbf{w}}^{(T)}=\frac{1}{T}\sum_{t=1}^{T}\mathbf{w}^{(t)}.
\end{equation*}
In \cite{Duchi2015}, it is shown that $c=0.5$ is suitable when $\boldsymbol{\nu}_1$ and $\boldsymbol{\nu}_2$ are sampled from a normal distribution. The convergence of the outer loop can be proven by verifying the convergence of a fully distributed ADMM with inexact primal updates. 

To present the convergence result, we construct the auxiliary sequence
\begin{equation*}
\mathbf{r}^{(m)}=\sum_{s=0}^m\mathbf{Q}\mathbf{w}^{(s)}
\end{equation*}
and define the auxiliary vector $\mathbf{q}^{(m)}$ and the auxiliary matrix $\mathbf{G}$ as  
\begin{equation}
\label{convergenceone}
\mathbf{q}^{(m)}=\begin{bmatrix} \mathbf{r}^{(m)} \\ \mathbf{w}^{(m)} \end{bmatrix}, \quad \mathbf{G}=\begin{bmatrix} \rho\mathbf{I}_P & \mathbf{0}_{P\times P}\\ \mathbf{0}_{P\times P} & \rho\frac{\mathbf{L_+}}{2}\end{bmatrix}.
\end{equation}
The convergence results of \cite{Varshney2018}, \cite{Zhuhan}, and \cite{Varshneyarxiv} can now be adapted to D-ZOA as per the following theorem that also provides an explicit privacy-accuracy trade-off.

\begin{theo}
\theoremstyle{plain}
If $f(\cdot)$ is convex, $\{\mathbf{g}_{j,k}^{(t)}\}_{j=1}^J$ are i.i.d., $J$ is sufficiently large, and Assumption 3 holds, for any $M>0$, we have 
\begin{equation}
\begin{aligned}
&\mathbb{E}[f(\hat{\mathbf{w}}^{(M)})-f(\mathbf{w}^*)]\\
&\le\frac{\norm{\mathbf{q}^{(0)}-\mathbf{q}}_{\mathbf{G}}^2}{M}+\frac{2.1c_1^2P\rho\log(1.25/\delta)\lambda_{\text{max}}^2(\mathbf{L}_+)}{2\rho^2|\mathcal{N}_k|^2N_k^2\epsilon^2\lambda_{\text{min}}(\mathbf{L}_-)}
\label{convergence4}
\end{aligned}
\end{equation}
where $\mathbf{q}=[\mathbf{r}^\mathsf{T}, (\mathbf{w}^*)^\mathsf{T}]^\mathsf{T}$ and 
\begin{equation*}
\hat{\mathbf{w}}^{(M)}=\frac{1}{M}\sum_{m=1}^M\Breve{\mathbf{w}}^{(m)}.
\end{equation*}
\end{theo}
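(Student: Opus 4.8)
The plan is to establish the outer-loop convergence by adapting the standard ADMM convergence argument for inexact primal updates, in the style of the references cited (Varshney et al., Zhuhan). First I would work with the reformulated primal-dual recursion in \eqref{thirdprivacy}, but expressed in terms of the auxiliary variables $\mathbf{r}^{(m)}$, $\mathbf{q}^{(m)}$, and $\mathbf{G}$. The key algebraic identity to extract is a one-step descent inequality: using convexity of $f$ and the optimality characterization \eqref{secondprivacy} of the oracle $\Breve{\mathbf{w}}^{(m)}$, I would show that for any feasible reference point $\mathbf{q}$ (here built from $\mathbf{w}^*$),
\begin{equation*}
f(\Breve{\mathbf{w}}^{(m)})-f(\mathbf{w}^*)\le \norm{\mathbf{q}^{(m-1)}-\mathbf{q}}_{\mathbf{G}}^2-\norm{\mathbf{q}^{(m)}-\mathbf{q}}_{\mathbf{G}}^2+\langle\text{perturbation term}\rangle,
\end{equation*}
where the perturbation term collects the contributions of $\boldsymbol{\xi}^{(m)}=\mathbf{w}^{(m)}-\Breve{\mathbf{w}}^{(m)}$ entering through the dual update $\boldsymbol{\gamma}^{(m)}=\boldsymbol{\gamma}^{(m-1)}+\rho\mathbf{L}_-\mathbf{w}^{(m)}$.

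Next I would sum the one-step inequality over $m=1,\dots,M$. The $\mathbf{G}$-norm terms telescope, leaving $\norm{\mathbf{q}^{(0)}-\mathbf{q}}_{\mathbf{G}}^2$, and invoking convexity of $f$ via Jensen's inequality on $\hat{\mathbf{w}}^{(M)}=\frac1M\sum_{m=1}^M\Breve{\mathbf{w}}^{(m)}$ produces the $\mathcal{O}(1/M)$ leading term. The remaining work is to bound the accumulated perturbation term in expectation. Here I would take expectations, use Lemma 1 (so $\mathbb{E}[\boldsymbol{\xi}_k^{(m)}]=\mathbf{0}$ with per-entry variance $\sigma_k^2$), and bound the quadratic-in-$\boldsymbol{\xi}$ contributions by $P\sigma_k^2$ scaled by the relevant operator norms — this is where $\lambda_{\max}^2(\mathbf{L}_+)$ and $\lambda_{\min}(\mathbf{L}_-)$ enter, from bounding $\mathbf{L}_+$ above and inverting $\mathbf{L}_-$ on its range. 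Substituting the value of $\sigma_k$ from \eqref{firsteqtheosecond} (equivalently Theorem 1) converts the variance factor into the explicit expression $2.1c_1^2P\rho\log(1.25/\delta)\lambda_{\max}^2(\mathbf{L}_+)\big/\big(2\rho^2|\mathcal{N}_k|^2N_k^2\epsilon^2\lambda_{\min}(\mathbf{L}_-)\big)$, giving the constant second term in \eqref{convergence4} and simultaneously exposing the privacy–accuracy trade-off: smaller $\epsilon$ (stronger privacy) inflates this additive error floor.

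The main obstacle I expect is the careful bookkeeping of the perturbation term so that it collapses to a single \emph{constant} (iteration-independent) error floor rather than something growing with $M$. Naively, cross terms of the form $\langle\boldsymbol{\xi}^{(m)},\,\text{dual-related vectors}\rangle$ could accumulate; the trick — as in the cited ADMM-with-noise analyses — is to show the linear-in-$\boldsymbol{\xi}$ terms vanish in expectation (zero-mean perturbation, and independence of $\boldsymbol{\xi}^{(m)}$ from the history) while the surviving quadratic terms, after telescoping and dividing by $M$, contribute at most an $\mathcal{O}(1)$ term uniformly in $M$. A secondary subtlety is that $\mathbf{H}=\tfrac12(\mathbf{L}_++\mathbf{L}_-)$ is only positive semidefinite, so $\mathbf{H}^{-1}$ in \eqref{thirdprivacy} must be read as a pseudoinverse on the appropriate subspace (the consensus-orthogonal component), and one must verify the iterates stay in that subspace; this is standard for EXTRA/ADMM-type recursions and justifies using $\lambda_{\min}(\mathbf{L}_-)$ (the smallest \emph{nonzero} eigenvalue) in the final bound. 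Once these points are handled, the result follows by assembling the telescoped bound with the substituted noise variance.
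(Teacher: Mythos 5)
Your proposal follows essentially the same route as the paper's proof: a one-step descent inequality derived from the first-order convexity condition and the oracle optimality relation, telescoping of the $\mathbf{G}$-norm terms, Jensen's inequality on $\hat{\mathbf{w}}^{(M)}$, vanishing of the linear-in-$\boldsymbol{\xi}$ cross terms in expectation, and a Young-type bound with ratio $\lambda_{\max}^2(\mathbf{L}_+)/\lambda_{\min}(\mathbf{L}_-)$ absorbing the quadratic perturbation into a constant $P\sigma_k^2$ floor before substituting $\sigma_k$ from Theorem 1. The bookkeeping concerns you flag (constant versus growing error floor, pseudoinverse of $\mathbf{H}$, nonzero smallest eigenvalue of $\mathbf{L}_-$) are exactly the points the paper handles, so the plan is correct and consistent with the published argument.
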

\begin{proof}
See Appendix E.
\end{proof}

Theorem 3 reveals a privacy-accuracy trade-off offered by D-ZOA. When the privacy guarantee is stronger (smaller $\epsilon$ and $\delta$), the accuracy is lower. It also shows that D-ZOA converges at a rate of $\mathcal{O}(1/M)$ where $M$ is the number of iterations of the ADMM outer loop.

\section{Simulations} \label{sectsix}

In this section, we present some simulated examples to evaluate the performance and the privacy-accuracy trade-off of the proposed D-ZOA algorithm. We benchmark the performance and the privacy-accuracy trade-off of D-ZOA against an existing differentially-private ADMM-based algorithm, called DP-ADMM and proposed in \cite{Huang2020}. DP-ADMM is suitable for learning problems with non-smooth objective functions. It is a differentially-private algorithm that is not fully-distributed since it needs a central coordinator to average the dual variable and the perturbed primal variable over the network at every iteration. As for the application, we consider a distributed version of the empirical risk minimization problem defined by lasso \cite{Boyd2010}. 

\begin{figure}[t!]
\begin{center}
\includegraphics[trim={1cm 1.25cm 1cm 1.25cm},clip,scale=0.85]{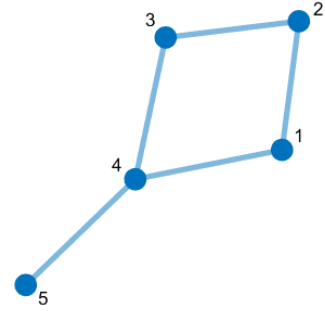}
\caption{Topology of the considered multi-agent network.}
\label{first_fig}
\end{center}
\end{figure}

The network-wide observations are represented by a design matrix $\mathbf{X}\in\mathbb{R}^{N\times P}$ and a response vector $\mathbf{y}\in\mathbb{R}^{N\times 1}$ where $N$ is the number of data samples and $P$ is the number of features in each sample. The matrix $\mathbf{X}$ consists of $K$ submatrices $\mathbf{X}_k$, i.e., $\mathbf{X}=[\mathbf{X}_1^{\mathsf{T}}, \mathbf{X}_2^{\mathsf{T}},\hdots, \mathbf{X}_K^{\mathsf{T}}]^{\mathsf{T}}$, and the vector $\mathbf{y}$ consists of $K$ subvectors $\mathbf{y}_k$, i.e., $\mathbf{y}=\left[\mathbf{y}_1^{\mathsf{T}},\mathbf{y}_2^{\mathsf{T}},\hdots, \mathbf{y}_K^{\mathsf{T}} \right]^{\mathsf{T}}$, as the data is distributed among the agents and each agent $k$ holds its respective $\mathbf{X}_k\in\mathbb{R}^{N_k\times P}$ and $\mathbf{y}_k\in\mathbb{R}^{N_k\times 1}$ where $N=\sum_{k=1}^K N_k$. The parameter vector that establishes a linear regression between $\mathbf{X}$ and $\mathbf{y}$ is $\boldsymbol{\beta}\in\mathbb{R}^{P\times 1}$. In the centralized approach, a lasso estimate of $\boldsymbol{\beta}$ is given by 
\begin{equation}
\boldsymbol{\beta}^c=\arg\min_{\boldsymbol{\beta}}\{\norm{\mathbf{X}\boldsymbol{\beta}-\mathbf{y}}^2+\eta\norm{\boldsymbol{\beta}}_1\}.
\label{sim1}
\end{equation}
 In the distributed setting, we solve problem \eqref{secondeq} with 
\begin{equation}
\begin{aligned}
\sum_{j=1}^{N_k}\ell(\mathbf{x}_{k,j},y_{k,j};\boldsymbol{\beta}_k)&=\norm{\mathbf{X}_k\boldsymbol{\beta}_k-\mathbf{y}_k}^2\\
R(\boldsymbol{\beta}_k)&=\norm{\boldsymbol{\beta}_k}_1.
\label{sim2}
\end{aligned}
\end{equation}

\begin{figure*}[htp]
  \centering
  \subfigure[$\delta=10^{-3}$]{\includegraphics[scale=1]{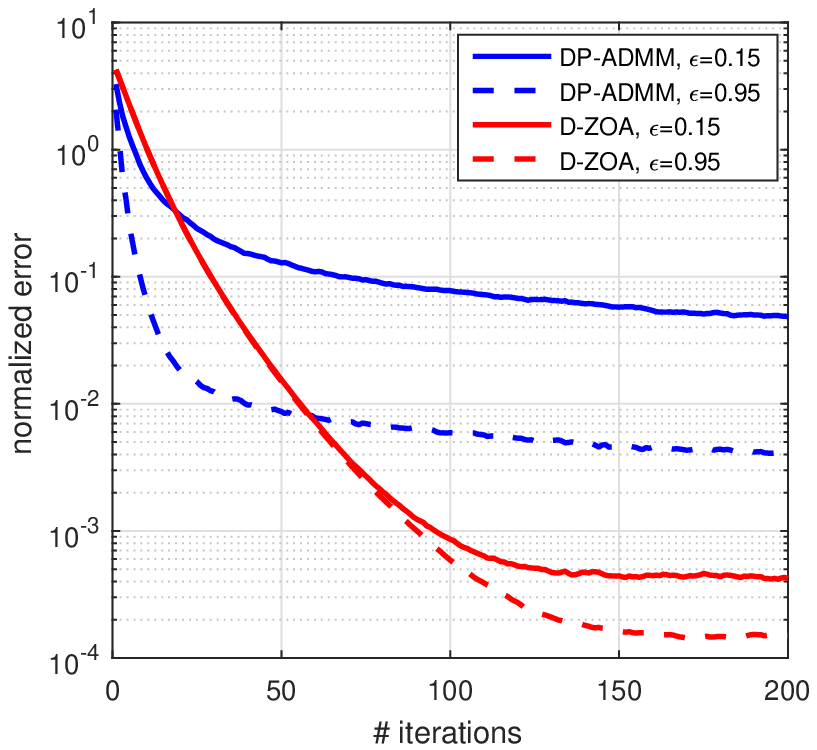}}\quad
  \subfigure[$\delta=10^{-6}$]{\includegraphics[scale=1]{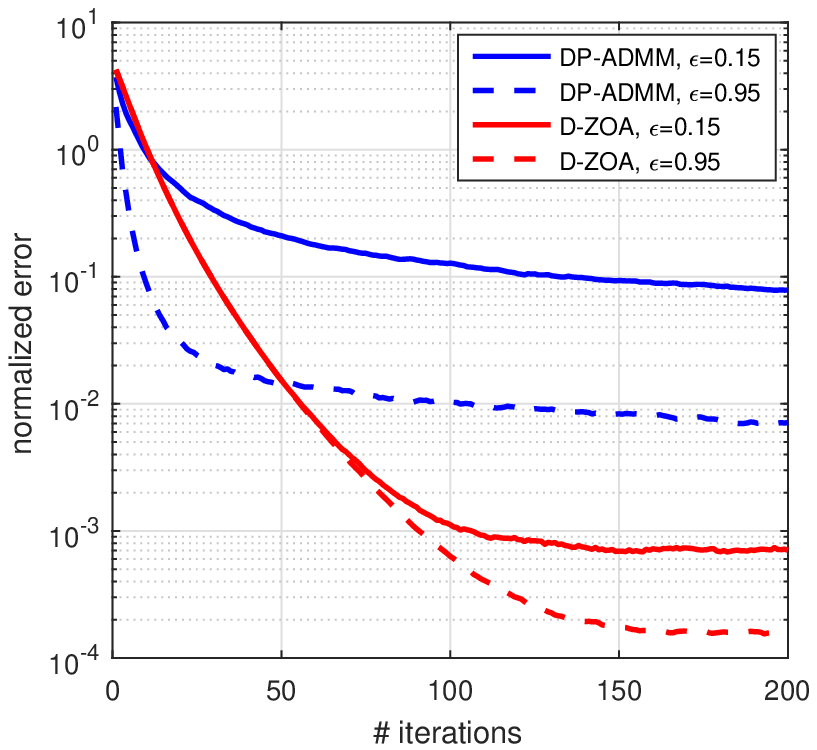}}
\caption{Normalized error of DP-ADMM and D-ZOA for two values of $\epsilon$ and fixed $\delta$.}  
\end{figure*}

\begin{figure*}[htp]
  \centering
  \subfigure[$\delta=10^{-6}$ and $\delta=10^{-3}$]{\includegraphics[scale=1]{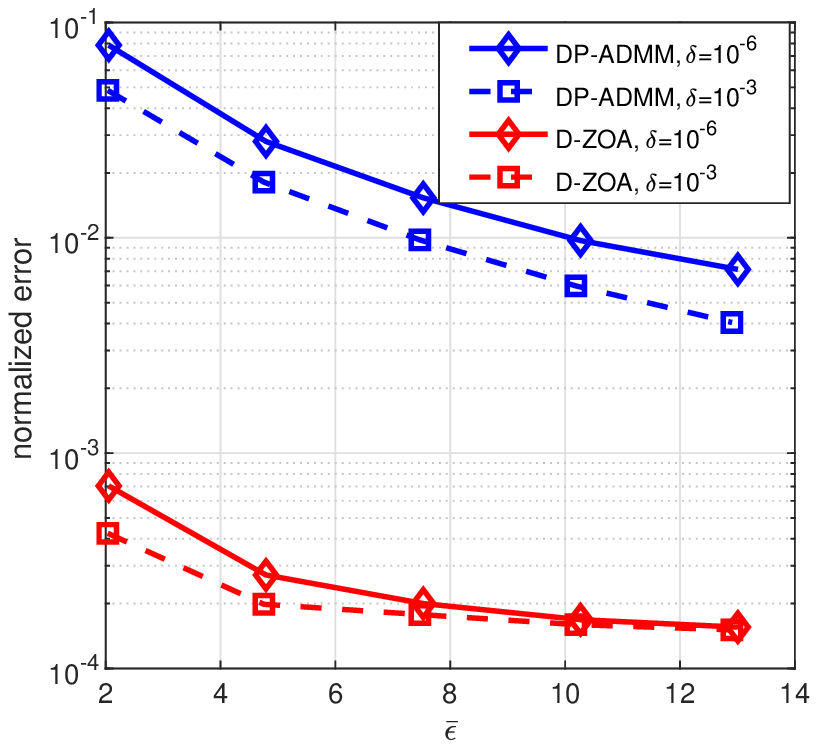}}\quad
  \subfigure[$\epsilon=0.15$ and $\epsilon=0.95$]{\includegraphics[scale=1]{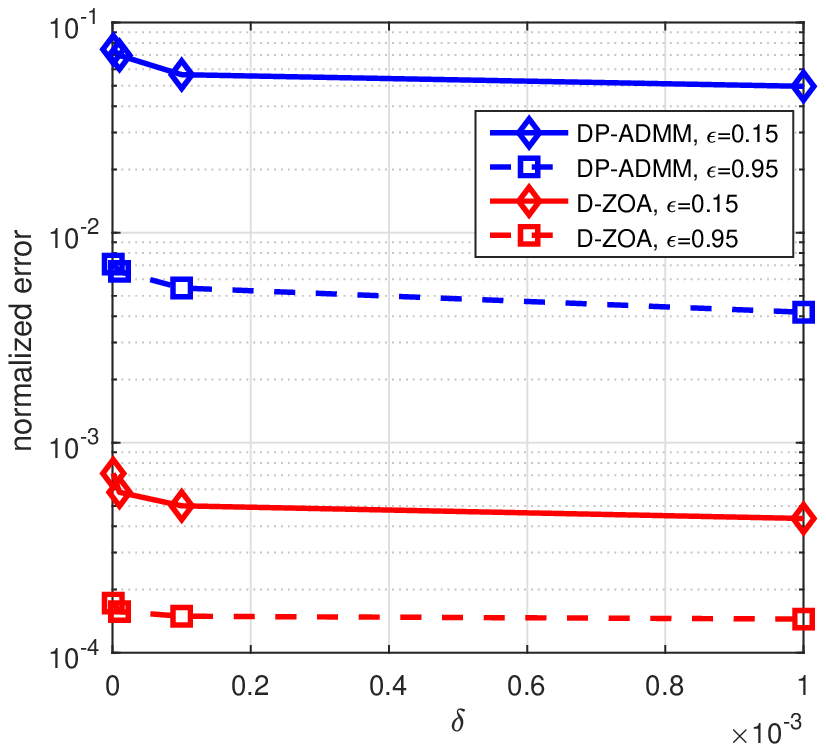}}
\caption{Privacy-accuracy trade-off.}  
\end{figure*}

We assess the performance of the D-ZOA algorithm over a network of $K=5$ agents with the topology as shown in Fig. 1. The number of samples at each agent is set to $N_k=20$ $\forall k\in\mathcal{K}$ and the total number of samples is $N=100$. The number of features in each sample is $P=10$. For each agent $k\in\mathcal{K}$, we create a $2P\times P$ local observation matrix $\mathbf{X}_k$ whose entries are i.i.d. zero-mean unit-variance Gaussian random variables. The response vector $\mathbf{y}$ is synthesized as 
\begin{equation*}
    \mathbf{y}=\mathbf{X}\boldsymbol{\omega}+\boldsymbol{\psi}
\end{equation*}
where $\boldsymbol{\omega}\in\mathbb{R}^{P}$ and $\boldsymbol{\psi}\in\mathbb{R}^{M}$ are random vectors with distributions $\mathcal{N}(\mathbf{0},\mathbf{I}_P)$ and $\mathcal{N}(\mathbf{0},0.1\mathbf{I}_N)$, respectively. The data are preprocessed by normalizing the columns of $\mathbf{X}$ to guarantee that the maximum value of each column is $1$ and by normalizing the rows to enforce their $l_2$-norm to be less than $1$ as in \cite{Huang2020}. This is motivated by the need for homogeneous scaling of the features. Therefore, we have $c_1=1$. The regularization parameter is set to $\eta=1$ and the penalty parameter is set to $\rho=4$. The number of iterations of the ADMM outer loop is set to $200$. For the inner loop, the number of iterations is set to $100$ and the smoothing constant $u_1$ to $1$. To apply the central limit theorem, $J$ needs to be sufficiently large, i.e., $J\geq 30$. We set $J=30$ and calculate $\alpha_0$ according to \eqref{eqfirstcor}. Performance of D-ZOA is evaluated using the normalized error between the centralized solutions $\boldsymbol{\beta}^c$ as per \eqref{sim1} and the local estimates. It is defined as $\sum_{k=1}^{K}{\norm{\boldsymbol{\beta}_{k}-\boldsymbol{\beta}^c}^2}/{\norm{\boldsymbol{\beta}^c}^2}$ where $\boldsymbol{\beta}_{k}$ denotes the local estimate at agent $k$. The centralized solution $\boldsymbol{\beta}^c$ is computed using the convex optimization toolbox CVX \cite{cvx}. Results are obtained by averaging over $100$ independent trials.

In Fig. 2, we plot the normalized error versus the outer loop iteration index for both D-ZOA and DP-ADMM. The plots show that both algorithms converge to the centralized solution for two different values of $\epsilon$ and $\delta$. In all plots, accuracy improves as $\epsilon$ increases. This is consistent with both Theorem 3 and \cite[Theorem 3]{Huang2020}. The hyper-parameters in DP-ADMM are tuned to achieve the best accuracy and convergence rate. The faster convergence of DP-ADMM is mainly due to its extra centralized processing. However, D-ZOA has higher accuracy than DP-ADMM.

In Fig. 3, we illustrate the privacy-accuracy trade-off for both D-ZOA and DP-ADMM. The figures show that D-ZOA and DP-ADMM achieve higher accuracy with larger $\epsilon$ and $\delta$. In Fig. 3(a), we show the normalized error versus the privacy parameter $\bar{\epsilon}$ as given in \eqref{secondeqtheosecond} for $\delta=10^{-6}$ and $\delta=10^{-3}$. We observe that D-ZOA outperforms DP-ADMM in terms of accuracy likely due to its intrinsic privacy-preserving properties. Fig. 3(b) also attests to the superiority of D-ZOA over DP-ADMM when $\epsilon=0.15$ and $\epsilon=0.95$ and $\delta$ varies between $10^{-6}$ and $10^{-2}$.

\section{Conclusion} \label{sectseven}

We proposed an intrinsically privacy-preserving consensus-based algorithm for solving a class of distributed regularized ERM problems where first-order information is hard or even impossible to obtain. We recast the original problem into an equivalent constrained optimization problem whose structure is suitable for distributed implementation via ADMM. We employed a zeroth-order method, known as the two-point stochastic-gradient algorithm, to minimize the augmented Lagrangian in the primal update step. We proved that the inherent randomness due to employing the zeroth-order method makes the D-ZOA algorithm intrinsically privacy-preserving. In addition, we used the moments accountant method to show that the total privacy leakage of D-ZOA grows sublinearly with the number of ADMM iterations. We verified the convergence of D-ZOA to the optimal solution as well as studying its privacy-preserving properties through both theoretical analysis and numerical simulations.

\appendices
\section{Proof of Lemma 1}
\begin{proof}
We prove this lemma in two steps. First, we prove that $\mathbb{E}[\boldsymbol{\beta}_k^{(m)}]=\Breve{\boldsymbol{\beta}}_k^{(m)}$. Second, we calculate the covariance of $\boldsymbol{\beta}_k^{(m)}$.

We prove that $\mathbb{E}[\boldsymbol{\beta}_k^{(m)}]=\Breve{\boldsymbol{\beta}}_k^{(m)}$ by induction over $m$.\\
\textit{Base case:} Since $\boldsymbol{\beta}_k^{(0)}=\Breve{\boldsymbol{\beta}}_k^{(0)}=\mathbf{0}$, we have $\mathbb{E}[\boldsymbol{\beta}_k^{(0)}]=\Breve{\boldsymbol{\beta}}_k^{(0)}$.\\
\textit{Induction step:} We assume that $\mathbb{E}[\boldsymbol{\beta}_k^{(m-1)}]=\Breve{\boldsymbol{\beta}}_k^{(m-1)}$ as the induction hypothesis. Considering \eqref{fourthprivacya} and \eqref{firstprivacy}, we have
\begin{equation}
\begin{aligned}
\label{eqsecondproofzero}
\mathbb{E}[\boldsymbol{\beta}_k^{(m)}]=\ &\mathbb{E}[\Breve{\boldsymbol{\beta}}_k^{(m)}]+\mathbb{E}[\boldsymbol{\xi}_k^{(m)}]\\
=\ &\Breve{\boldsymbol{\beta}}_k^{(m)}+\mathbb{E}[\boldsymbol{\xi}_k^{(m)}]\\
=\ &-\frac{1}{2\rho|\mathcal{N}_k|}\nabla f_k(\Breve{\boldsymbol{\beta}}_k^{(m)})+\frac{1}{2|\mathcal{N}_k|}\Bigl(|\mathcal{N}_k|\Breve{\boldsymbol{\beta}}_k^{(m-1)}\nonumber\\&+\sum_{l\in\mathcal{N}_k}\Breve{\boldsymbol{\beta}}_l^{(m-1)}\Bigr)-\frac{1}{2\rho|\mathcal{N}_k|}\boldsymbol{\gamma}_k^{(m-1)}+\mathbb{E}[\boldsymbol{\xi}_k^{(m)}]\\
=\ &-\frac{1}{2\rho|\mathcal{N}_k|}\mathbb{E}[\nabla f_k(\Breve{\boldsymbol{\beta}}_k^{(m)})]\\
&+\frac{1}{2|\mathcal{N}_k|}\Bigl(|\mathcal{N}_k|\mathbb{E}[\boldsymbol{\beta}_k^{(m-1)}]+\sum_{l\in\mathcal{N}_k}\mathbb{E}[\boldsymbol{\beta}_l^{(m-1)}]\Bigr)\\
&-\frac{1}{2\rho|\mathcal{N}_k|}\mathbb{E}[\boldsymbol{\gamma}_k^{(m-1)}]+\mathbb{E}[\boldsymbol{\xi}_k^{(m)}]\\
=\ &\mathbb{E}[\boldsymbol{\beta}_k^{(m)}]+\mathbb{E}[\boldsymbol{\xi}_k^{(m)}],
\end{aligned}
\end{equation}
which implies that $\mathbb{E}[\boldsymbol{\xi}_k^{(m)}]=\mathbf{0}$. Therefore, $\mathbb{E}[\boldsymbol{\beta}_k^{(m)}]=\Breve{\boldsymbol{\beta}}_k^{(m)}$.

Since the sequence $\{\mathbf{g}_{j,k}^{(t)}\}_{j=1}^J$ is i.i.d. and $J$ is sufficiently large, thanks to the central limit theorem \cite{Papoulis}, we have
\begin{equation*}
\mathbf{g}_k^{(t)} = \frac{1}{J}\sum_{j=1}^J\mathbf{g}_{j,k}^{(t)}\sim\mathcal{N}\Bigr(\boldsymbol{\mu}_k^{(t)},\frac{1}{J}\boldsymbol{\Psi}_k^{(t)}\Bigl).
\end{equation*}
In view of the additive property of the normal distribution and recalling that $\boldsymbol{\beta}_k^{(m)}=-\sum_{t=1}^{T}\alpha_{t}\mathbf{g}_k^{(t)}$, we have
\begin{equation}
\label{eqsecondproofone}
\text{cov}[\boldsymbol{\beta}_k^{(m)}]=\frac{1}{J}\sum_{t=1}^{T}\alpha_t^2\Psi_k^{(t)}.
\end{equation}
We also know $\mathbb{E}[\boldsymbol{\beta}_k^{(m)}]=\Breve{\boldsymbol{\beta}}_k^{(m)}$. Therefore, $\boldsymbol{\beta}_k^{(m)}$ is distributed as 
\begin{equation}
\label{eqthirdproofone}
\boldsymbol{\beta}_k^{(m)}\sim\mathcal{N}\Bigl(\Breve{\boldsymbol{\beta}}_k^{(m)},\frac{1}{J}\sum_{t=1}^{T}\alpha_t^2\Psi_k^{(t)}\Bigr).
\end{equation}
\end{proof}

\section{Proof of Lemma 2}
\begin{proof}
It is easy to verify that 
\begin{equation}
\begin{aligned}
\text{tr}(\boldsymbol{\Psi}_k^{(t)})&=\text{tr}(\mathbb{E}[\mathbf{g}_{j,k}^{(t)}(\mathbf{g}_{j,k}^{(t)})^\mathsf{T}]-\mathbb{E}[\mathbf{g}_{j,k}^{(t)}]\mathbb{E}[\mathbf{g}_{j,k}^{(t)}]^\mathsf{T})\\
&=\text{tr}(\mathbb{E}[\mathbf{g}_{j,k}^{(t)}(\mathbf{g}_{j,k}^{(t)})^\mathsf{T}])-\norm{\mathbb{E}[\mathbf{g}_{j,k}^{(t)}]}^2\\
&=\mathbb{E}[\text{tr}(\mathbf{g}_{j,k}^{(t)}(\mathbf{g}_{j,k}^{(t)})^\mathsf{T})]-\norm{\boldsymbol{\mu}_k^{(t)}}^2\\&=\mathbb{E}\left[\norm{\mathbf{g}_{j,k}^{(t)}}^2\right]-\norm{\boldsymbol{\mu}_k^{(t)}}^2.
\label{eqfirstprooftwo}
\end{aligned}
\end{equation}
By \cite[Lemma 2]{Duchi2015}, there exists a constant $c$ such that  
\begin{equation}
\label{eqsecondprooftwo}
\mathbb{E}\left[\norm{\mathbf{g}_{j,k}^{(t)}}^2\right]\le cL^2P\Bigl(\sqrt{\frac{u_{2,t}}{u_{1,t}}}P+1+\log(P)\Bigr).
\end{equation}
Since $u_{2,t}/u_{1,t}=P^{-2}t^{-1}$, we have 
\begin{equation}
\label{eqthirdprooftwo}
\mathbb{E}\left[\norm{\mathbf{g}_{j,k}^{(t)}}^2\right]\le cL^2P\Bigl(\frac{1}{\sqrt{t}}+1+\log(P)\Bigr).
\end{equation}
In addition, from $\boldsymbol{\beta}_k^{(m)}=-\sum_{t=1}^{T}\alpha_{t}\mathbf{g}_k^{(t)}$ and \eqref{firstprivacy}, we have 
\begin{equation}
\label{eqfourthprooftwo}
\boldsymbol{\Breve{\beta}}_k^{(m)}=-\sum_{t=1}^{T}\alpha_{t}\boldsymbol{\mu}_k^{(t)}.
\end{equation}
Taking the Euclidean norm of both sides in \eqref{eqfourthprooftwo} and using the triangle inequality, we have
\begin{equation}
\begin{aligned}
\norm{\boldsymbol{\Breve{\beta}}_k^{(m)}}=\norm{-\sum_{t=1}^{T}\alpha_{t}\boldsymbol{\mu}_k^{(t)}}\le\sum_{t=1}^{T}|\alpha_{t}|\norm{\boldsymbol{\mu}_k^{(t)}}.
\label{eqfifthprooftwo}
\end{aligned}
\end{equation}
Squaring both sides of \eqref{eqfifthprooftwo} and using the Cauchy-Schwarz inequality, we get
\begin{equation}
\begin{aligned}
\norm{\boldsymbol{\Breve{\beta}}_k^{(m)}}^2&\le\Bigl(\sum_{t=1}^{T}|\alpha_{t}|\norm{\boldsymbol{\mu}_k^{(t)}}\Bigr)^2\\
&\le T\sum_{t=1}^{T}|\alpha_{t}|^2\norm{\boldsymbol{\mu}_k^{(t)}}^2
\label{eqsixthprooftwo}
\end{aligned}
\end{equation}
and consequently
\begin{equation}
\label{eqseventhprooftwo}
-\frac{1}{JP}\sum_{t=1}^{T}\alpha_{t}^2\norm{\boldsymbol{\mu}_k^{(t)}}^2\le-\frac{1}{TJP}\norm{\boldsymbol{\Breve{\beta}}_k^{(m)}}^2.
\end{equation}
Using \eqref{eqthirdprooftwo}, \eqref{eqseventhprooftwo}, and the definition of $\alpha_t$ after \eqref{thirdzeroth}, we have
\begin{equation}
\begin{aligned}
&\frac{1}{JP}\sum_{t=1}^{T-1}\alpha_{t}^2\text{tr}(\boldsymbol{\Psi}_k^{(t)})\\
\le&\frac{1}{JP}\sum_{t=1}^{T-1}\alpha_t^2
cL^2P\Bigl(\frac{1}{\sqrt{t}}+1+\log(P)\Bigr)\\&-
\frac{1}{JP}\sum_{t=1}^{T-1}\alpha_t^2\norm{\boldsymbol{\mu}_k^{(t)}}^2\\
=&\frac{1}{JP}\frac{c\alpha_0^2R^2}{\log(2P)}\Bigl(\sum_{t=1}^{T}\frac{1}{t\sqrt{t}}+(1+\log(P))\sum_{t=1}^{T}\frac{1}{t}\Bigr)\\&-\frac{1}{TJP}\norm{\boldsymbol{\Breve{\beta}}_k^{(m)}}^2.
\label{eqeigthprooftwo}
\end{aligned}
\end{equation}
Defining $s_1=\sum_{t=1}^{T-1}t^{-1}$ and $s_2=\sum_{t=1}^{T-1}t^{-1.5}$, \eqref{eqeigthprooftwo} simplifies to 
\begin{equation}
\begin{aligned}
&\frac{1}{JP}\sum_{t=1}^{T}\alpha_t^2\text{tr}(\boldsymbol{\Psi}_k^{(t)})\\
&\le\frac{c\alpha_0^2R^2}{JP\log(2P)}\Bigl(s_1(1+\log(P))+s_2\Bigr)-\frac{\norm{\Breve{\boldsymbol{\beta}}_k^{(m)}}^2}{TJP}.
\label{eqninthprooftwo}
\end{aligned}
\end{equation}

Considering that the algorithm converges as proven in Section \ref{sectfive}, i.e., $\Breve{\boldsymbol{\beta}}_k^{(m)}\rightarrow\boldsymbol{\beta}^c$ as $m\rightarrow\infty$, $\Breve{\boldsymbol{\beta}}_k^{(0)}=\mathbf{0}$, and the triangle inequality, for $m>0$ we have
\begin{equation}
\begin{aligned}
\label{eqtenthprooftwo}
\left|\norm{\Breve{\boldsymbol{\beta}}_k^{(m)}}-\norm{\boldsymbol{\beta}^c}\right|\le\norm{\Breve{\boldsymbol{\beta}}_k^{(m)}-\boldsymbol{\beta}^c}\le\norm{\boldsymbol{\beta}^c},
\end{aligned}
\end{equation}
which implies $\norm{\Breve{\boldsymbol{\beta}}_k^{(m)}}\le 2\norm{\boldsymbol{\beta}^c}$. 
Therefore, we obtain
\begin{equation}
\begin{aligned}
&\frac{1}{JP}\sum_{t=1}^{T}\alpha_t^2\text{tr}(\boldsymbol{\Psi}_k^{(t)})\\
&\le\frac{c\alpha_0^2R^2}{JP\log(2P)}\Bigl(s_1(1+\log(P))+s_2\Bigr)-\frac{4\norm{\boldsymbol{\beta}^c}^2}{TJP}.
\label{eqeleventhprooftwo}
\end{aligned}\qedhere
\end{equation}
\end{proof}

\section{Proof of Lemma 3}
\begin{proof}
From the adopted exact primal update equation \eqref{fourthprivacya}, we obtain
\begin{equation}
\begin{aligned}
\Breve{\boldsymbol{\beta}}_{k,\mathcal{D}_k}^{(m)}=&-\frac{0.5}{\rho|\mathcal{N}_k|}\Bigl(\frac{1}{N_k}\sum_{j=1}^{N_k}\nabla\ell(\mathbf{x}_{k,j},y_{k,j};\Breve{\boldsymbol{\beta}}_k)+\boldsymbol{\gamma}_k^{(m-1)}\Bigr)\\
&+\frac{0.5}{|\mathcal{N}_k|}\Bigl(\Breve{\boldsymbol{\beta}}_k^{(m-1)}+\sum_{l\in\mathcal{N}_k}\Breve{\boldsymbol{\beta}}_l^{(m-1)}+\frac{\eta \nabla R(\Breve{\boldsymbol{\beta}}_k)}{\rho K}\Bigr)\\
\Breve{\boldsymbol{\beta}}_{k,\mathcal{D}_k'}^{(m)}=&-\frac{0.5}{\rho|\mathcal{N}_k|}\Bigl(\frac{1}{N_k}\sum_{j=1}^{N_k-1}\nabla\ell(\mathbf{x}_{k,j},y_{k,j};\Breve{\boldsymbol{\beta}}_k)+\boldsymbol{\gamma}_k^{(m-1)}\\&+\frac{1}{N_k}\nabla\ell(\mathbf{x}_{k,N_k}',y_{k,N_k}';\Breve{\boldsymbol{\beta}}_k)  \Bigr)\\
&+\frac{0.5}{|\mathcal{N}_k|}\Bigl(\Breve{\boldsymbol{\beta}}_k^{(m-1)}+\sum_{l\in\mathcal{N}_k}\Breve{\boldsymbol{\beta}}_l^{(m-1)}+\frac{\eta \nabla R(\Breve{\boldsymbol{\beta}}_k)}{\rho K}\Bigr).
\label{eqfirstprooflemmathree}
\end{aligned}
\end{equation}
Using Assumption 3, the quantity $\norm{\Breve{\boldsymbol{\beta}}_{k,\mathcal{D}_k}^{(m)}-\Breve{\boldsymbol{\beta}}_{k,\mathcal{D}_k'}^{(m)}}$ is upper bounded as follows
\begin{equation}
\begin{aligned}
&\norm{\Breve{\boldsymbol{\beta}}_{k,\mathcal{D}_k}^{(m)}-\Breve{\boldsymbol{\beta}}_{k,\mathcal{D}_k'}^{(m)}}\\
&=\frac{\norm{\nabla\ell(\mathbf{x}_{k,N_k}',y_{k,N_k}';\Breve{\boldsymbol{\beta}}_k)-\nabla\ell(\mathbf{x}_{k,N_k},y_{k,N_k};\Breve{\boldsymbol{\beta}}_k)}}{2\rho|\mathcal{N}_k|N_k}\\
&\le\frac{c_1}{\rho|\mathcal{N}_k|N_k}.
\label{eqsecondprooflemmathree}
\end{aligned}
\end{equation}
\end{proof}

\section{Proof of Theorem 1}
\begin{proof}

The privacy loss due to sharing $\boldsymbol{\beta}_k^{(m)}$ is calculated as 
\begin{equation}
\label{eqfirstproof}
\left|\log\frac{\text{Pr}[\boldsymbol{\beta}_{k,\mathcal{D}_k}^{(m)}]}{\text{Pr}[\boldsymbol{\beta}_{k,\mathcal{D}_k'}^{(m)}]}\right|=\left|\log\frac{\text{Pr}[\boldsymbol{\xi}_{k,\mathcal{D}_k}^{(m)}]}{\text{Pr}[\boldsymbol{\xi}_{k,\mathcal{D}_k'}^{(m)}]}\right|
\end{equation}
where the equality holds since the Jacobian matrix of the linear transformation from $\boldsymbol{\beta}_k^{(m)}$ to $\boldsymbol{\xi}_k^{(m)}$ is the identity matrix. Furthermore, as the entries of $\boldsymbol{\xi}_k^{(m)}$, denoted by $\xi_{s,k}^{(m)}$, are independent of each other, for any entry $s$, we have

\begin{equation}
\label{eqsecondproof}
\left|\log\frac{\text{Pr}[\boldsymbol{\xi}_{k,\mathcal{D}_k}^{(m)}]}{\text{Pr}[\boldsymbol{\xi}_{k,\mathcal{D}_k'}^{(m)}]}\right|=\left|\log\frac{\text{Pr}[{\xi}_{s,k,\mathcal{D}_k}^{(m)}]}{\text{Pr}[{\xi}_{s,k,\mathcal{D}_k'}^{(m)}]}\right|.
\end{equation}
Hence, we have
\begin{equation}
\begin{aligned}
&\left|\log\frac{\text{Pr}[\boldsymbol{\beta}_{k,\mathcal{D}_k}^{(m)}]}{\text{Pr}[\boldsymbol{\beta}_{k,\mathcal{D}_k'}^{(m)}]}\right|=\left|\log\frac{\text{Pr}[{\xi}_{s,k,\mathcal{D}_k}^{(m)}]}{\text{Pr}[{\xi}_{s,k,\mathcal{D}_k'}^{(m)}]}\right|\\
=&\left|\log\frac{\exp\Bigl(-\frac{1}{2\sigma_k^2}[\xi_{s,k,\mathcal{D}_k}^{(m)}]^2\Bigr)}{\exp\Bigl(-\frac{1}{2\sigma_k^2}[\xi_{s,k,\mathcal{D}_k}^{(m)}+(\Breve{\beta}_{s,k,\mathcal{D}_k}^{(m)}-\Breve{\beta}_{s,k,\mathcal{D}_k'}^{(m)})]^2\Bigr)}\right|.
\label{eqthirdproof}
\end{aligned}
\end{equation}
Via the triangle inequality, \eqref{eqthirdproof} leads to
\begin{equation}
\begin{aligned}
\left|\log\frac{\text{Pr}[\boldsymbol{\beta}_{k,\mathcal{D}_k}^{(m)}]}{\text{Pr}[\boldsymbol{\beta}_{k,\mathcal{D}_k'}^{(m)}]}\right|\le&\frac{1}{2\sigma_{k}^2}\left|2{\xi}_{s,k}^{(m)}(\Breve{\beta}_{s,k,\mathcal{D}_k}^{(m)}-\Breve{\beta}_{s,k,\mathcal{D}_k'}^{(m)})\right|\\
&+\frac{1}{2\sigma_{k}^2}(\Breve{\beta}_{s,k,\mathcal{D}_k}^{(m)}-\Breve{\beta}_{s,k,\mathcal{D}_k'}^{(m)})^2.
\label{eqthirdproofb}
\end{aligned}
\end{equation}
Since $\norm{\nabla\ell(\cdot)}\le c_1$, using Lemma 3, we have
\begin{equation}
\begin{aligned}
\left|\Breve{\beta}_{s,k,\mathcal{D}_k}^{(m)}-\Breve{\beta}_{s,k,\mathcal{D}_k'}^{(m)}\right|
&<\norm{\Breve{\boldsymbol{\beta}}_{k,\mathcal{D}_k}^{(m)}-\Breve{\boldsymbol{\beta}}_{k,\mathcal{D}_k'}^{(m)}}\\
&\le\frac{c_1}{\rho|\mathcal{N}_k|N_k}.
\label{eqfourthproof}
\end{aligned}
\end{equation}
Hence, substituting $\sigma_{k}$ in \eqref{firsteqtheo} into \eqref{eqthirdproofb}, we obtain 
\begin{equation}
\begin{aligned}
\left|\log\frac{\text{Pr}[\boldsymbol{\beta}_{k,\mathcal{D}_k}^{(m)}]}{\text{Pr}[\boldsymbol{\beta}_{k,\mathcal{D}_k'}^{(m)}]}\right|
\le\frac{\rho|\mathcal{N}_k|N_k\epsilon^2}{2.1c_1\log(1.25/\delta)}\left|\xi_{s,k}^{(m)}+\frac{c_1}{2\rho|\mathcal{N}_k|N_k}\right|.
\label{eqfifthproof}
\end{aligned}
\end{equation}

When
\begin{equation*}
|\xi_{s,k}^{(m)}|\le \frac{c_1}{\rho|\mathcal{N}_k|N_k}\left(2.1\epsilon^{-1}\log(1.25/\delta)-0.5\right),
\end{equation*}
the privacy loss is bounded by $\epsilon$. Hence, let us define
\begin{equation*}
r=\frac{c_1}{\rho|\mathcal{N}_k|N_k}\left(2.1\epsilon^{-1}\log(1.25/\delta)-0.5\right).
\end{equation*}
Subsequently, we need to prove that
\begin{equation*}
\text{Pr}[|\xi_{s,k}^{(m)}|>r]\le\delta
\end{equation*}
or equivalently

\begin{equation*}
\text{Pr}[\xi_{s,k}^{(m)}>r]\le 0.5\delta.
\end{equation*}
Using the tail bound of the normal distribution $\mathcal{N}(0,\sigma_{k}^2)$ \cite{Dworkbook}, we obtain 
\begin{equation}
\label{eqsixthproof}
\text{Pr}[\xi_{s,k}^{(m)}>r]\le\frac{\sigma_{k}}{r\sqrt{2\pi}}\exp\Bigl(-\frac{r^2}{2\sigma_{k}^2}\Bigr).
\end{equation}
Since $\delta$ is assumed to be small $(\le 0.01)$ and $\epsilon\le 1$, we have 
\begin{equation*}
\frac{\sigma_{k}}{r}<1 \quad \text{and} \quad -\frac{r^2}{2\sigma_{k}^2}<\log(0.5\sqrt{2\pi}\delta).
\end{equation*}
Therefore,
\begin{equation*}
\text{Pr}[\xi_{s,k}^{(m)}>r]<0.5\delta,
\end{equation*}
which implies
\begin{equation*}
\text{Pr}[|\xi_{s,k}^{(m)}|>r]\le\delta.
\end{equation*}
By defining
\begin{equation*}
\begin{aligned}
\mathbb{A}_1&=\{\xi_{s,k}^{(m)}:|\xi_{s,k}^{(m)}|\le r\}\\
\mathbb{A}_2&=\{\xi_{s,k}^{(m)}:|\xi_{s,k}^{(m)}|> r\},
\end{aligned}
\end{equation*}
we have
\begin{equation}
\begin{aligned}
\text{Pr}[\boldsymbol{\beta}_{k,\mathcal{D}_k}^{(m)}]
&=\text{Pr}[\Breve{\beta}_{s,k,\mathcal{D}_k}^{(m)}+\xi_{s,k}^{(m)}:\xi_{s,k}^{(m)}\in\mathbb{A}_1]\\
&+\text{Pr}[\Breve{\beta}_{s,k,\mathcal{D}_k}^{(m)}+\xi_{s,k}^{(m)}:\xi_{s,k}^{(m)}\in\mathbb{A}_2]\\
&<e^{\epsilon}\text{Pr}[\boldsymbol{\beta}_{k,\mathcal{D}_k'}^{(m)}]+\delta,
\label{eqseventhproof}
\end{aligned}
\end{equation}
which concludes the proof by showing that, at each iteration of D-ZOA, $(\epsilon,\delta)$-differential privacy is guaranteed.
\end{proof}

\section{Proof of Theorem 3}
\begin{proof}
Using the first-order condition for convexity \cite{BoydStephenP2004Co}, we have
\begin{equation}
\label{convergenceoneproof}
f(\Breve{\mathbf{w}}^{(m)})-f(\mathbf{w}^*)\le(\Breve{\mathbf{w}}^{(m)}-\mathbf{w}^*)^\mathsf{T}\nabla f(\Breve{\mathbf{w}}^{(m)}).
\end{equation}
In addition, in virtue of \cite[Lemma 1 and Lemma 2]{Varshney2018}, $\Breve{\mathbf{w}}^{(m)}$ satisfies the following equation
\begin{equation}
\label{convergencetwoproof}
\frac{\nabla f(\Breve{\mathbf{w}}^{(m)})}{\rho}=2\mathbf{H}\boldsymbol{\xi}^{(m)}-2\mathbf{Q}\mathbf{r}^{(m)}-\mathbf{L}_+(\mathbf{w}^{(m)}-\mathbf{w}^{(m-1)}).
\end{equation}
Therefore, by using \eqref{convergenceoneproof}, \eqref{convergencetwoproof}, \cite[Lemma 3, Lemma 4 and Lemma 5]{Varshney2018}, and the steps in the proof of \cite[Theorem 1]{Varshneyarxiv}, we can show that, for any $\mathbf{r}\in\mathbb{R}^{KP}$ and $m>0$, we have
\begin{equation}
\begin{aligned}
&\frac{f(\Breve{\mathbf{w}}^{(m)})-f(\mathbf{w}^*)}{\rho}+2\mathbf{r}^\mathsf{T}\mathbf{Q}\Breve{\mathbf{w}}^{(m)}\\
\le&(\Breve{\mathbf{w}}^{(m)}-\mathbf{w}^*)^\mathsf{T}\left(-\mathbf{L}_+(\Breve{\mathbf{w}}^{(m)}-\Breve{\mathbf{w}}^{(m-1)})\right.\\
&-\mathbf{L}_+(\Breve{\mathbf{w}}^{(m-1)}-\mathbf{w}^{(m-1)})-2\mathbf{Q}(\mathbf{r}^{(m)}-\mathbf{r})\\
&\left.+\mathbf{L}_-(\mathbf{w}^{(m)}-\Breve{\mathbf{w}}^{(m)})\right)\\
=&\frac{\norm{\mathbf{q}^{(m-1)}-\mathbf{q}}_{\mathbf{G}}^2}{\rho}-\frac{\norm{\mathbf{q}^{(m)}-\mathbf{q}}_{\mathbf{G}}^2}{\rho}-\frac{\norm{\mathbf{q}^{(m)}-\mathbf{q}^{(m-1)}}_{\mathbf{G}}^2}{\rho}\\
&+(\Breve{\mathbf{w}}^{(m)}-\mathbf{w}^*)^\mathsf{T}\mathbf{L}_+(\mathbf{w}^{(m-1)}-\Breve{\mathbf{w}}^{(m-1)})\\
&+
(\Breve{\mathbf{w}}^{(m)}-\mathbf{w}^*)^\mathsf{T}\mathbf{L}_-(\mathbf{w}^{(m)}-\Breve{\mathbf{w}}^{(m)})+2(\boldsymbol{\xi}^{(m)})^\mathsf{T}\mathbf{Q}(\mathbf{r}^{(m)}-\mathbf{r})\\
=&\frac{\norm{\mathbf{q}^{(m-1)}-\mathbf{q}}_{\mathbf{G}}^2}{\rho}-\frac{\norm{\mathbf{q}^{(m)}-\mathbf{q}}_{\mathbf{G}}^2}{\rho}\\
&-\norm{\mathbf{Q}\Breve{\mathbf{w}}^{(m)}}^2-\norm{\mathbf{Q}\boldsymbol{\xi}^{(m)}}^2+2(\boldsymbol{\xi}^{(m)})^\mathsf{T}\mathbf{Q}(\mathbf{r}^{(m)}-\mathbf{r})\\
&+2\Bigl(\frac{\mathbf{L}_+}{2}(\Breve{\mathbf{w}}^{(m)}-\mathbf{w}^*)\Bigr)^\mathsf{T}(\mathbf{w}^{(m-1)}-\Breve{\mathbf{w}}^{(m-1)})
\label{convergencethreeproof}
\end{aligned}
\end{equation}
where $\mathbf{q}=[\mathbf{r}^\mathsf{T}, (\mathbf{w}^*)^\mathsf{T}]^\mathsf{T}$.

For any symmetric matrix $\mathbf{X}\in\mathbb{R}^{P\times P}$ and vector $\mathbf{y}\in\mathbb{R}^{P}$, we have
\begin{equation*}
\norm{\mathbf{y}}^2\lambda_{\text{min}}(\mathbf{X})\le\mathbf{y}^\mathsf{T}\mathbf{X}\mathbf{y}\le\norm{\mathbf{y}}^2\lambda_{\text{max}}(\mathbf{X})
\end{equation*}
and, for any $\mathbf{a}, \mathbf{b}\in\mathbb{R}^P$ and $\tau\in\mathbb{R}_+$, we have \begin{equation*}
2\mathbf{a}^\mathsf{T}\mathbf{b}\le\tau^{-1}\norm{\mathbf{a}}^2+\tau\norm{\mathbf{b}}^2.
\end{equation*}
Therefore, \eqref{convergencethreeproof} yields
\begin{equation}
\begin{aligned}
&\frac{f(\Breve{\mathbf{w}}^{(m)})-f(\mathbf{w}^*)}{\rho}+2\mathbf{r}^\mathsf{T}\mathbf{Q}\Breve{\mathbf{w}}^{(m)}\\
\le&\frac{\norm{\mathbf{q}^{(m-1)}-\mathbf{q}}_{\mathbf{G}}^2}{\rho}-\frac{\norm{\mathbf{q}^{(m)}-\mathbf{q}}_{\mathbf{G}}^2}{\rho}-\norm{\mathbf{Q}\boldsymbol{\xi}^{(m)}}^2\\
&-\frac{\lambda_{\text{min}}(\mathbf{L}_-)}{2}\norm{\Breve{\mathbf{w}}^{(m)}-\mathbf{w}^*}^2+\frac{1}{\tau}\norm{\frac{\mathbf{L}_+}{2}(\Breve{\mathbf{w}}^{(m)}-\mathbf{w}^*)}^2\\
&+\tau\norm{\mathbf{w}^{(m-1)}-\Breve{\mathbf{w}}^{(m-1)}}^2+2(\boldsymbol{\xi}^{(m)})^\mathsf{T}\mathbf{Q}(\mathbf{r}^{(m)}-\mathbf{r}).
\label{convergencefourproof}
\end{aligned}
\end{equation}
By setting
\begin{equation*}
\tau=\frac{\lambda_{\text{max}}^2(\mathbf{L}_+)}{2\lambda_{\text{min}}(\mathbf{L}_-)},
\end{equation*}
\eqref{convergencefourproof} leads to 
\begin{equation}
\begin{aligned}
&\frac{f(\Breve{\mathbf{w}}^{(m)})-f(\mathbf{w}^*)}{\rho}+2\mathbf{r}^\mathsf{T}\mathbf{Q}\Breve{\mathbf{w}}^{(m)}\\
\le&\frac{\norm{\mathbf{q}^{(m-1)}-\mathbf{q}}_{\mathbf{G}}^2}{\rho}-\frac{\norm{\mathbf{q}^{(m)}-\mathbf{q}}_{\mathbf{G}}^2}{\rho}-\norm{\mathbf{Q}\boldsymbol{\xi}^{(m)}}^2\\
&+\frac{\lambda_{\text{max}}^2(\mathbf{L}_+)}{2\lambda_{\text{min}}(\mathbf{L}_-)}\norm{\mathbf{w}^{(m-1)}-\Breve{\mathbf{w}}^{(m-1)}}^2\\
&+2(\boldsymbol{\xi}^{(m)})^\mathsf{T}\mathbf{Q}(\mathbf{r}^{(m)}-\mathbf{r})\\
\le&\frac{\norm{\mathbf{q}^{(m-1)}-\mathbf{q}}_{\mathbf{G}}^2}{\rho}-\frac{\norm{\mathbf{q}^{(m)}-\mathbf{q}}_{\mathbf{G}}^2}{\rho}\\&+\frac{\lambda_{\text{max}}^2(\mathbf{L}_+)}{2\lambda_{\text{min}}(\mathbf{L}_-)}\norm{\boldsymbol{\xi}^{(m-1)}}^2+2(\boldsymbol{\xi}^{(m)})^\mathsf{T}\mathbf{Q}(\mathbf{r}^{(m)}-\mathbf{r}).
\label{convergencefiveproof}
\end{aligned}
\end{equation}

Setting $\mathbf{r}=\mathbf{0}_{P}$ and summing both sides of \eqref{convergencefiveproof} over $m=1$ to $M$ gives
\begin{equation}
\begin{aligned}
&\frac{1}{\rho}\sum_{m=1}^M(f(\Breve{\mathbf{w}}^{(m)})-f(\mathbf{w}^*))\le\frac{1}{\rho}\norm{\mathbf{q}^{(0)}-\mathbf{q}}_{\mathbf{G}}^2\\
&+\sum_{m=1}^M\frac{\lambda_{\text{max}}^2(\mathbf{L}_+)}{2\lambda_{\text{min}}(\mathbf{L}_-)}\norm{\boldsymbol{\xi}^{(m-1)}}^2+2(\boldsymbol{\xi}^{(m)})^\mathsf{T}\mathbf{Q}\mathbf{r}^{(m)}.
\label{convergencesixproof}
\end{aligned}
\end{equation}
Using Jensen's inequality \cite{Papoulis}, \eqref{eqfirstlemmatwo}, \eqref{firsteqtheo}, and applying the expectation operator to both sides of \eqref{convergencesixproof}, we obtain
\begin{equation}
\begin{aligned}
&\mathbb{E}[f(\hat{\mathbf{w}}^{(M)})-f(\mathbf{w}^*)]\\
\le&\frac{1}{M}\norm{\mathbf{q}^{(0)}-\mathbf{q}}_{\mathbf{G}}^2+\frac{\rho\lambda_{\text{max}}^2(\mathbf{L}_+)}{2M\lambda_{\text{min}}(\mathbf{L}_-)}\sum_{m=1}^M\mathbb{E}\left[\norm{\boldsymbol{\xi}^{(m-1)}}^2\right]\\
\le&\frac{\norm{\mathbf{q}^{(0)}-\mathbf{q}}_{\mathbf{G}}^2}{M}+\frac{\rho\lambda_{\text{max}}^2(\mathbf{L}_+)\sum_{m=1}^M\text{tr}\left(\text{cov}[\boldsymbol{\xi}^{(m-1)}]\right)}{2M\lambda_{\text{min}}(\mathbf{L}_-)}\\
\le&\frac{\norm{\mathbf{q}^{(0)}-\mathbf{q}}_{\mathbf{G}}^2}{M}+\frac{\rho\lambda_{\text{max}}^2(\mathbf{L}_+)}{2M\lambda_{\text{min}}(\mathbf{L}_-)}\sum_{m=1}^M P\sigma_k^2\\
=&\frac{\norm{\mathbf{q}^{(0)}-\mathbf{q}}_{\mathbf{G}}^2}{M}+\frac{2.1c_1^2P\rho\log(1.25/\delta)\lambda_{\text{max}}^2(\mathbf{L}_+)}{2\rho^2|\mathcal{N}_k|^2N_k^2\epsilon^2\lambda_{\text{min}}(\mathbf{L}_-)}
\label{convergencesevenproof}
\end{aligned}
\end{equation}
where
\begin{equation*}
\hat{\mathbf{w}}^{(M)}=\frac{1}{M}\sum_{m=1}^M\Breve{\mathbf{w}}^{(m)}.\qedhere
\end{equation*}

\end{proof}

\ifCLASSOPTIONcaptionsoff
  \newpage
\fi

\bibliographystyle{IEEEtran}
\bibliography{IEEEabrv,references}

\begin{thebibliography}{10}
\providecommand{\url}[1]{#1}
\csname url@samestyle\endcsname
\providecommand{\newblock}{\relax}
\providecommand{\bibinfo}[2]{#2}
\providecommand{\BIBentrySTDinterwordspacing}{\spaceskip=0pt\relax}
\providecommand{\BIBentryALTinterwordstretchfactor}{4}
\providecommand{\BIBentryALTinterwordspacing}{\spaceskip=\fontdimen2\font plus
\BIBentryALTinterwordstretchfactor\fontdimen3\font minus
  \fontdimen4\font\relax}
\providecommand{\BIBforeignlanguage}[2]{{%
\expandafter\ifx\csname l@#1\endcsname\relax
\typeout{** WARNING: IEEEtran.bst: No hyphenation pattern has been}%
\typeout{** loaded for the language `#1'. Using the pattern for}%
\typeout{** the default language instead.}%
\else
\language=\csname l@#1\endcsname
\fi
#2}}
\providecommand{\BIBdecl}{\relax}
\BIBdecl

\bibitem{Gratton2020eusipco}
C.~Gratton, N.~K.~D. Venkategowda, R.~Arablouei, and S.~Werner, ``Distributed
  learning with non-smooth objective functions,'' in \emph{Proc. European
  Speech and Signal Processing Conference}, Jan. 2021.

\bibitem{Mateos2010}
G.~Mateos, J.~A. Bazerque, and G.~B. Giannakis, ``Distributed sparse linear
  regression,'' \emph{IEEE Transactions on Signal Processing}, vol.~58, no.~10,
  pp. 5262--5276, Oct. 2010.

\bibitem{Gratton2019}
C.~Gratton, N.~K.~D. Venkategowda, R.~Arablouei, and S.~Werner,
  ``Consensus-based distributed total least-squares estimation using parametric
  semidefinite programming,'' in \emph{Proc. IEEE International Conference on
  Acoustics, Speech and Signal Processing}, May 2019, pp. 5227--5231.

\bibitem{Grattonasilomar2018}
------, ``Distributed ridge regression with feature partitioning,'' in
  \emph{Proc. Asilomar Conference on Signals, Systems, and Computers}, Oct.
  2018.

\bibitem{Akhtar2018}
J.~Akhtar and K.~Rajawat, ``Distributed sequential estimation in wireless
  sensor networks,'' \emph{IEEE Transactions on Wireless Communications},
  vol.~17, no.~1, pp. 86--100, Jan. 2018.

\bibitem{Venkategowdaglobalsip2018}
N.~K.~D. Venkategowda and S.~Werner, ``Privacy-preserving distributed precoder
  design for decentralized estimation,'' in \emph{Proc. IEEE Global Conference
  on Signal and Information Processing}, Nov. 2018.

\bibitem{Giannakis2016}
G.~B. Giannakis, Q.~Ling, G.~Mateos, and I.~D. Schizas, \emph{Splitting Methods
  in Communication, Imaging, Science, and Engineering}, ser. Scientific
  Computation, R.~Glowinski, S.~J. Osher, and W.~Yin, Eds.\hskip 1em plus 0.5em
  minus 0.4em\relax Cham: Springer International Publishing, 2016.

\bibitem{Hajinezhad2019}
D.~Hajinezhad, M.~Hong, and A.~Garcia, ``{ZONE}: Zeroth-order nonconvex
  multiagent optimization over networks,'' \emph{IEEE Transactions on Automatic
  Control}, vol.~64, no.~10, pp. 3995--4010, Oct. 2019.

\bibitem{Nedic2009}
A.~Nedic and A.~Ozdaglar, ``Distributed subgradient methods for multi-agent
  optimization,'' \emph{IEEE Transactions on Automatic Control}, vol.~54,
  no.~1, pp. 48--61, Jan. 2009.

\bibitem{Talebi2019}
S.~P. Talebi and S.~Werner, ``Distributed {K}alman filtering and control
  through embedded average consensus information fusion,'' \emph{IEEE
  Transactions on Automatic Control}, vol.~64, no.~10, pp. 4396--4403, Mar.
  2019.

\bibitem{Agarwal2010}
A.~Agarwal, O.~Dekel, and L.~Xiao, ``Optimal algorithms for online convex
  optimization with multi-point bandit feedback,'' in \emph{Proc. 23rd Annual
  Conference on Learning Theory}, Jun. 2010, pp. 28--40.

\bibitem{Spall2003}
J.~C. Spall, \emph{Introduction to Stochastic Search and Optimization}.\hskip
  1em plus 0.5em minus 0.4em\relax Wiley, 2003.

\bibitem{Chen2017}
P.-Y. Chen, H.~Zhang, Y.~Sharma, J.~Yi, and C.-J. Hsieh, ``{ZOO}: Zeroth order
  optimization based black-box attacks to deep neural networks without training
  substitute models,'' in \emph{Proc. 10th ACM Workshop on Artificial
  Intelligence and Security}, Nov. 2017, pp. 15--26.

\bibitem{Tao2017}
T.~{Zhang} and Q.~{Zhu}, ``Dynamic differential privacy for {ADMM}-based
  distributed classification learning,'' \emph{IEEE Transactions on Information
  Forensics and Security}, vol.~12, no.~1, pp. 172--187, Jan. 2017.

\bibitem{Xueruallerton}
X.~{Zhang}, M.~M. {Khalili}, and M.~{Liu}, ``Recycled {ADMM}: Improve privacy
  and accuracy with less computation in distributed algorithms,'' in
  \emph{Proc. 56th Annual Allerton Conference on Communication, Control, and
  Computing}, Oct. 2018, pp. 959--965.

\bibitem{Xueru2018}
X.~Zhang, M.~M. Khalili, and M.~Liu, ``Improving the privacy and accuracy of
  {ADMM}-based distributed algorithms,'' in \emph{Proc. 35th International
  Conference on Machine Learning}, vol.~80, Jul. 2018, pp. 5796--5805.

\bibitem{Zhuhan}
\BIBentryALTinterwordspacing
J.~Ding, Y.~Gong, M.~Pan, and Z.~Han, ``Optimal differentially private {ADMM}
  for distributed machine learning,'' 2019. [Online]. Available:
  \url{http://arxiv.org/abs/1901.02094}
\BIBentrySTDinterwordspacing

\bibitem{Zhuhan2019}
J.~Ding, S.~M. Errapotu, H.~Zhang, Y.~Gong, M.~Pan, and Z.~Han, ``Stochastic
  {ADMM} based distributed machine learning with differential privacy,'' in
  \emph{Proc. 15th SecureComm}, Oct. 2019, pp. 257--277.

\bibitem{Mitra2015}
Z.~Huang, S.~Mitra, and N.~Vaidya, ``Differentially private distributed
  optimization,'' in \emph{Proc. 2015 International Conference on Distributed
  Computing and Networking}, 2015.

\bibitem{Pappas}
S.~{Han}, U.~{Topcu}, and G.~J. {Pappas}, ``Differentially private distributed
  constrained optimization,'' \emph{IEEE Transactions on Automatic Control},
  vol.~62, no.~1, pp. 50--64, 2017.

\bibitem{Hale}
M.~T. {Hale} and M.~{Egerstedty}, ``Differentially private cloud-based
  multi-agent optimization with constraints,'' in \emph{Proc. 2015 American
  Control Conference}, Jul. 2015, pp. 1235--1240.

\bibitem{Cortes}
E.~{Nozari}, P.~{Tallapragada}, and J.~{Cort\'{e}s}, ``Differentially private
  distributed convex optimization via functional perturbation,'' \emph{IEEE
  Transactions on Control of Network Systems}, vol.~5, no.~1, pp. 395--408,
  2018.

\bibitem{Huang2020}
Z.~Huang, R.~Hu, Y.~Guo, E.~Chan-Tin, and Y.~Gong, ``{DP-ADMM}: {ADMM}-based
  distributed learning with differential privacy,'' \emph{IEEE Transactions on
  Information Forensics and Security}, vol.~15, pp. 1002--1012, 2020.

\bibitem{intrinsic}
F.~{Yan}, S.~{Sundaram}, S.~V.~N. {Vishwanathan}, and Y.~{Qi}, ``Distributed
  autonomous online learning: regrets and intrinsic privacy-preserving
  properties,'' \emph{IEEE Transactions on Knowledge and Data Engineering},
  vol.~25, no.~11, pp. 2483--2493, 2013.

\bibitem{Duchi2015}
J.~C. Duchi, M.~I. Jordan, M.~J. Wainwright, and A.~Wibisono, ``Optimal rates
  for zero-order convex optimization: the power of two function evaluations,''
  \emph{IEEE Transactions on Information Theory}, vol.~61, no.~5, pp.
  2788--2806, May 2015.

\bibitem{Abadi}
M.~Abadi, A.~Chu, I.~Goodfellow, H.~B. McMahan, I.~Mironov, K.~Talwar, and
  L.~Zhang, ``Deep learning with differential privacy,'' in \emph{Proc. 2016
  ACM SIGSAC Conference on Computer and Communications Security}, 2016, pp.
  308--318.

\bibitem{Forero2010}
P.~A. Forero, A.~Cano, and G.~B. Giannakis, ``Consensus-based distributed
  support vector machines,'' \emph{Journal of Machine Learning Research},
  vol.~11, pp. 1663--1707, Aug. 2010.

\bibitem{Wotao2014}
W.~{Shi}, Q.~{Ling}, K.~{Yuan}, G.~{Wu}, and W.~{Yin}, ``On the linear
  convergence of the {ADMM} in decentralized consensus optimization,''
  \emph{IEEE Transactions on Signal Processing}, vol.~62, no.~7, pp.
  1750--1761, Apr. 2014.

\bibitem{Nesterov2017}
Y.~Nesterov and V.~Spokoiny, ``Random gradient-free minimization of convex
  functions,'' \emph{Found. Comput. Math.}, vol.~17, no.~2, pp. 527--566, Apr.
  2017.

\bibitem{BoydStephenP2004Co}
S.~Boyd and L.~Vandenberghe, \emph{Convex optimization}.\hskip 1em plus 0.5em
  minus 0.4em\relax Cambridge University Press, 2004.

\bibitem{fredrikson2015}
M.~Fredrikson, S.~Jha, and T.~Ristenpart, ``Model inversion attacks that
  exploit confidence information and basic countermeasures,'' in \emph{Proc.
  2015 ACM SIGSAC Conference on Computer and Communications Security}, Oct.
  2015, pp. 1322--1333.

\bibitem{Liu2019}
\BIBentryALTinterwordspacing
Y.~Hu, P.~Liu, L.~Kong, and D.~Niu, ``Learning privately over distributed
  features: an {ADMM} sharing approach,'' 2019. [Online]. Available:
  \url{http://arxiv.org/abs/1907.07735}
\BIBentrySTDinterwordspacing

\bibitem{Sayedbook}
T.~Kailath, A.~H. Sayed, and B.~Hassibi, ``Linear estimation,'' 2000.

\bibitem{Dworkbook}
C.~Dwork and A.~Roth, ``The algorithmic foundations of differential privacy,''
  \emph{Foundations and Trends in Theoretical Computer Science}, vol.~9, no.
  3–4, pp. 211--407, Aug. 2014.

\bibitem{Dworkcalibrate}
C.~Dwork, F.~McSherry, K.~Nissim, and A.~Smith, ``Calibrating noise to
  sensitivity in private data analysis,'' in \emph{Proc. Third Conference on
  Theory of Cryptography}.\hskip 1em plus 0.5em minus 0.4em\relax
  Springer-Verlag, 2006, pp. 265--284.

\bibitem{Varshney2018}
\BIBentryALTinterwordspacing
Q.~Li, B.~Kailkhura, R.~Goldhahn, P.~Ray, and P.~K. Varshney, ``Robust
  federated learning using {ADMM} in the presence of data falsifying
  byzantines,'' 2017. [Online]. Available:
  \url{http://arxiv.org/abs/1710.05241}
\BIBentrySTDinterwordspacing

\bibitem{Varshneyarxiv}
\BIBentryALTinterwordspacing
------, ``Robust decentralized learning using {ADMM} with unreliable agents,''
  2018. [Online]. Available: \url{http://arxiv.org/abs/1710.05241}
\BIBentrySTDinterwordspacing

\bibitem{Boyd2010}
S.~Boyd, N.~Parikh, E.~Chu, B.~Peleato, and J.~Eckstein, ``Distributed
  optimization and statistical learning via the alternating direction method of
  multipliers,'' \emph{Foundations and Trends in Machine Learning}, vol.~3,
  no.~1, pp. 1--122, Jan. 2010.

\bibitem{cvx}
M.~Grant and S.~Boyd, ``{CVX}: Matlab software for disciplined convex
  programming, version 2.1,'' \url{http://cvxr.com/cvx}, 2014.

\bibitem{Papoulis}
A.~Papoulis and S.~U. Pillai, \emph{Probability, random variables, and
  stochastic processes}, 4th~ed.\hskip 1em plus 0.5em minus 0.4em\relax McGraw
  Hill, 2002.

\end{thebibliography}

\end{document}